\def\lsq{\text{\texttt{\upshape [}}}
\def\rsq{\text{\texttt{\upshape ]\hspace{-1.5pt}}}}
\def\li[#1]{\lsq#1\rsq}
\def\kw#1{\mathsf{#1}}
\def\kwlet{\mathop{\kw{let}}}
\def\kwin{\mathop{\kw{in}}}
\def\be{\mathop{\Leftarrow}}
\def\Red#1{\kw{Red}_{#1}}
\def\dotsucc{\mathrel{\dot\succ}}
\def\hole{ {[\hskip 0.5em]} }
\def\rwml{\rw_{ml}}
\def\rw{\leadsto}
\def\id#1{\mathit{#1}}
\def\effpure{\mathit{pure}}
\def\lml{\lambda_{ml}}
\def\lc{\lambda_{c}}
\def\case{\item \textsc{case} }
\title{The Marriage of Effects and Rewrites}
\author{Ezra e.\ k.\ Cooper}{afilliation}{ezra@ezrakilty.net}{orcid}{independent}
\authorrunning{E.\ E.\ K.\ Cooper}
\keywords{term rewriting, strong normalization, termination, algebraic effects, functional programming}
\begin{document}

\maketitle

\begin{abstract}
In the research on computational effects, defined algebraically, effect symbols are often expected to obey certain equations. If we orient these equations, we get a rewrite system, which may be an effective way of transforming or optimizing the effects in a program. In order to do so, we need to establish strong normalization, or termination, of the rewrite system. Here we define a framework for carrying out such proofs, and extend the well-known Recursive Path Ordering of Dershowitz to show termination of some effect systems.
\end{abstract}

\section{Introduction}

Plotkin and Power~\cite{plotkin2001adequacy} introduced a view on computational effects as algebraic terms. Their operational semantics shows how a source term containing effects within it can reduce to a final effect term that represents a trace of the effects performed by the program, or indeed a tree of all possible linear traces. That line of work discusses equations between effect operators, which define their essence in relation to one another.

We add to that work with an observation: by orienting the equations as a system of \emph{rewrite rules} for the effect system, we can mechanically reduce an effect term (as a \emph{trace}) to something more compact, a \emph{state}. The rewrite rules in this case can be thought of as taking the place of the language implementation or indeed the hardware which makes the effect ``take effect.'' Alternatively, such rewrite rules can be an elegant way of defining and implementing optimizations for effectful programs.

A general rewrite system is not the only way to work from an effectful term or trace to a final state. Ahman and Staton~\cite{ahman2013normalization} give a Normalization By Evaluation strategy for doing just that. However, we find it interesting to study the behavior of generalized rewrite systems. The freedom to apply rewrite rules arbitrarily could be useful in a compiler implementation or other program-transformation engine.

Having applied general term-rewriting to algebraic effects, the researcher will want to know whether common properties apply, such as confluence and strong normalization. While an ad-hoc proof may be given, it would be preferable to factor the problem so that a proof about the termination of the term-rewrite system alone would easily lift to a proof about the system in the context of Moggi's computational metalanguage \cite{moggi1988computational}.

Now, in the literature of term-rewriting there are many techniques for showing strong normalization, for example the ``recursive path ordering'' or RPO~\cite{dershowitz1982orderings}. With this technique, the practitioner merely exhibits an ordering on the function-symbols and shows that each rewrite rule obeys this ordering in a certain way. Thus an intricate inductive proof is replaced by some relatively simple (albeit recursive) checks on the rewrite rules. The RPO is extended to a calculus with $\lambda$-abstraction and with $\beta$-reduction in the literature on the ``higher-order recursive path ordering,'' HORPO~\cite{jouannaud1999higher}. But that leaves us to wonder, what about a calculus with the let-construct? If we can extend RPO or HORPO to the computational metalanguage, then we will have an easy way of proving termination of some systems of effects.

In this paper, we make the following contributions:
\begin{itemize}
\item We show how to interpret a variety of computational effects operationally as rewrite-rules, rather than equations, giving a more operational flavor to the workings of effect symbols,
\item We show a new technique for proving strong normalization for rewriting algebraic-effect systems by lifting the recursive path ordering into Moggi's metalanguage.
\item We use the technique to prove termination of an effect system for global state, which shows how to reduce its \emph{traces} to \emph{states}.
 \end{itemize}

Our central proof will not be surprising to anyone familiar with RPO, with the Tait-Girard proof of strong normalization~\cite{girard1989proofs} and the Lindley-Stark method for extending Tait-Girard to the computational metalanguage~\cite{lindley2005reducibility}. But by combining all these things, we get a compelling result that can be used directly to show termination of languages in the presence of algebraic effects, where the effects themselves are subject to rewrite rules.

\section{Algebraic effects}

Let's review the basic framework of algebraic effects, introduced by Plotkin and Power~\cite{plotkin2001adequacy} (and Bauer~\cite{bauer2019whatsalgebraic}) as extended in Plotkin and Pretnar~\cite{plotkin2008logic} with the $\kwlet$-construct. A programming language is defined with effect symbols representing individual atomic effects that can be performed. The symbols build on two kinds of syntactic roles, parameters $p$ and arguments $a$, as $e_{\vec p}(\vec a)$. The parameters represent data that is used by the effect, such as a message to print to the terminal, while the arguments represent possible continuation terms, depending on the \emph{result} of the effect. An effect $readbit$ which reads a single bit from some input source would naturally have two argument positions, representing the behavior the program will follow if it reads a 0 or a 1, respectively: $readbit(zeroContinuation, oneContinuation)$.

Another effect, $print_m$ could be used to print a corresponding message: \[readbit(print_{\mathtt{cold}}(30),\, print_{\mathtt{hot}}(70))\] will print cold or hot correspondingly, then return a number, say 30 or 70 for a number of degrees celsius. Any pure (effect-free) term placed as an effect argument represents an ultimate return \emph{value} of the computation, dependent upon the path taken from root to leaf.

The Plotkin and Power semantics lets the effect symbols commute out of evaluation contexts, essentially floating to the top of the term at evaluation time, so that the normal forms are trees of effect symbols, representing possible traces, whose leaves are the corresponding return values. Thus we could apply the lambda-term $\lambda x. x+5$ to the above effectful term and it would rewrite as follows:
\begin{eqnarray*}
&& (\lambda x. x+5) readbit(print_{\mathtt{cold}}(30), print_{\mathtt{hot}}(70)) \\
&\rw& readbit((\lambda x. x+5) print_{\mathtt{cold}}(30), (\lambda x. x+5) print_{\mathtt{hot}}(70)) \hskip 5em & (eff-assoc) \\
&\rw& readbit(print_{\mathtt{cold}}((\lambda x. x+5) \: 30), print_{\mathtt{hot}}((\lambda x. x+5)\: 70)) \hskip 5em & (eff-assoc) \\
&\rw& readbit(print_{\mathtt{cold}}(30+5), print_{\mathtt{hot}}(70+5)\hskip 5em  & (abs-$\beta$) \\
&\rw& readbit(print_{\mathtt{cold}}(35), print_{\mathtt{hot}}(75) & (abs-$\beta$)
\end{eqnarray*}
The final row is a normal form, which shows a tree where the $readbit$ operation can choose either of two paths; on each path some specific message is printed; and finally each path terminates in a value, which was computed by applying the $\lambda$-abstraction to the \emph{result} of the original side-effecting expression.

In the present work, we explore what happens when these computation (effect) trees are further exposed to their own rewrite rules, which can be applied in source terms or in these final computation trees. Such reduction loses the ``trace'' nature of the tree, but gives us a way of simulating the machinery of the effects, something like an abstract machine for effects.

We do not make use of the effect parameters in our proofs, so we do not write them.

\subsection{Examples}

When we are talking about the semantics of a programming language, the rewrite rules for the effects can be seen as implementing the machinery of the language implementation which reduces the individual effects to a state, itself represented as a normal form of an effect-term.

\paragraph*{Example: Global State}

Global state is modeled as a single global location which can hold a value of some type $T$. The signature of the global-state effect system is\
\begin{align*}
    &arity(assign_i) = 1 \\
    &arity(get) = T
\end{align*}
Note $assign_i$ is parameterized by the value $i$, to assign into the global variable. Plotkin and Power distinguish ``parameters'' and ``arguments''. The arguments of $get$ are indexed by the values of the storage type $T$: its ``arity'' is $T$.

If the symbols are uninterpreted (subject to no rewrites) then the result of a rewrite sequence is just a computation tree, which acts as a tree of all possible traces of the program. But we may alternatively assign a meaning which is the actual final state of this computation, in cases where there is one. To that end, we can assign rewrite rules (adapted from Plotkin and Power~\cite{plotkin2002notions}) that perform the trace-reduction:
\begin{align*}
  assign_i(get(t_1,\, \dots,\, t_n)) &\rw assign_i(t_i)\\
  assign_i(assign_j(s)) &\rw assign_j(s) \\
  get(t_1,\, \dots,\, t_i,\, \dots,\, t_n) &\rw get(t_1,\, \dots,\, s_i,\, \dots,\, t_n)\\
&\text{
\hspace{2em}
where $t_i = get(s_1,\, \dots,\, s_n)$}
\end{align*}
The normal forms of this system have no adjacent $\id{get}$-$\id{get}$ pairs, no adjacent $\id{assign}$-$\id{assign}$ pairs, and no $\id{get}$ inside an $\id{assign}$, so they are really just a single $\id{assign}$, or a single $\id{get}$, with argument as the final value result of the program. The single $\id{assign}$ indicates to us what the final state of the global variable was, by its \emph{parameter}. A single $\id{get}$ would represent a read of an uninitialized variable, or a nondeterministic one; in a particular setting this might be disallowed by other mechanisms which don't interest us here.

Contrast this rewriting approach with the examples given in Johann, et al.~\cite{johann2010generic} where global-state computation-trees are mapped to their final state by a function defined outside the calculus.

\paragraph*{Example: Nondeterminism}

There is one effect, $or$, with $arity(or) = 2$, and one rewrite rule:
\[
or(or(s_1, s_2), s_3) \rw or(s_1, or(s_2, s_3)) 
\]
The rule has the purpose of normalizing a branching tree of possible computations to a flat list of possible outcomes, and so works more like a list.

\paragraph*{Example: a looping effect}

This effect has not been proposed in the literature to our knowledge, but to motivate our work, we explore the idea of something that looks like an ``effect'' but has some complex rewriting behavior.

In the practice of programming with external services (for example, database servers, or web-based APIs), one frequently wants to make one's own service robust in the face of a brief interruption to the external service. To that end, the programmer builds a finite number of retries into their system. If the external service begins functioning during the retries, the program will continue normally, but if the finite retries are exhausted, an error is returned to the user.

We model such a system using a pair of effects, request and retry. Each time the program makes a \emph{request} to the service, that request may fail (a possibility whose continuation is represented by a first parameter, $t$), or it may return a meaningful value (represented by an indexed set of parameters, $s_1,\, \dots,\, s_n$). So we introduce an effect $request(t,\, s_1,\, \dots,\, s_n)$. We also introduce an effect $retry(u,\, r)$ which represents the effect of retrying the computation $r$ a number of times indicated by $u$. In $u$ we will find a number represented through rewrite symbols $zero()$ and $succ(u)$, i.e.\ Peano numerals (we use Peano numerals to make the arithmetic amenable to rewriting).
\begin{eqnarray*}
retry(zero(), request(t, s_1,\, \dots,\, s_n)) &\rw& t \\
retry(succ(u), request(t, s_1,\, \dots,\, s_n)) &\rw& request(retry(u, t'), s_1,\, \dots,\, s_n)\\
&&&where $t' = request(t, s_1,\, \dots,\, s_n)$
\end{eqnarray*}
This effect-rewrite system produces something more like a trace than a final state, since it replicates the $request$ effect $u$ times in the computation tree. To evaluate the trace, we could choose further rewrite rules that make $request$ act like $get$ in the global-state example, flattening successive $request$s and choosing a single outcome for the whole set.

This may be a contrived model for a retry-loop of effects, but it demonstrates our technique on a slightly more complex system than the other examples.

\paragraph*{Example: Parallelism}

A binary effect, $\id{par}$, represents parallel evaluation of two streams of effects. We assume it is used in combination with other effects. The following rewrite rules are replicated for each other effect $e$ in the system:
\begin{align*}
par(e(s_1,\, \dots,\, s_n), t) \rw e(par(s_1, t),\, \dots,\, par(s_n, t))\\
par(s, e(t_1,\, \dots,\, t_n)) \rw e(par(s, t_1),\, \dots,\, par(s, t_n))
\end{align*}
These rules are not in general confluent, so several different final states can be derived from a single source term. That is of course in the nature of parallelism.

In the tradition of fork-join parallelism, we could also add an effect $\id{join}$ which brings together the two results in one result term, for further computation. Here $\langle \cdot, \cdot \rangle$ represents ordinary data pairing into a product type ($S \times T$):
\[join(par(v, w)) \rw \id{pure}(\langle v, w \rangle) \]

\section{Two kinds of effectful metalanguage}

We must pause to reconcile two syntactic treatments of effects in the literature. One marks the monad type explicitly, the other leaves it implicit. Both treatments appear in Moggi's early work and in the literature are often referred to as $\lml$ and $\lc$.

The first approach ($\lml$) uses a computation type while the other ($\lc$) treats effect operations as transparent to the type system. The latter notation predominates in Plotkin and Power~\cite{plotkin2001adequacy} and other algebraic-effects research. Sabry and Wadler~\cite{ReflectionOnCBV} establish a close correspondence between them. 

The key typing rule for each is given below. We write $\kw{E}(T)$ for the type of an effectful computation giving result type $T$. (A single such effect constructor implies one global monad for effects throughout the system.)

\begin{multicols}{2}
Explicit computation types ($\lml$)

\inferrule{\Gamma \vdash t : \kw{E}(S) \\ \Gamma, x : S \vdash u : \kw{E}(T) }
          {\Gamma \vdash \kwlet x \be t \kwin u : \kw{E}(T)}

\columnbreak
Effects as type-transparent ($\lc$)

\inferrule{\Gamma \vdash t : S \\ \Gamma, x : S \vdash u : T}
          {\Gamma \vdash \kwlet x \be s \kwin u : T}
\end{multicols}

In $\lml$, we perform beta-reduction on lets with explicitly-constructed pure subjects:
\[\kwlet x \be pure(t) \kwin u \rw u\{t/x\}\]
In $\lc$, beta-reduction is triggered by the syntactic class of a value in the subject position (assume $v$ describes a syntactic class of values):
\[\kwlet x \be v\kwin u \rw u\{v/x\}\]
Values $v$ in $\lc$ are defined by a grammar, which prohibits $\kw{let}$ and effect application, at least when not embedded in a $\lambda$-body.

The main proofs in this paper use $\lml$ as the substrate.

\section{A Metalanguage With Explicit Effects}

Now we define our core object: a metalanguage for computational effects, based on the basic syntax of Moggi~\cite{moggi1988computational} with the algebraic-effect rules of Plotkin and Pretnar~\cite{plotkin2008logic}. As a blend of those languages, it includes explicit effect symbols and a let-construct.

Unlike some later work on algebraic effects, we don't use the \emph{fine-grain call-by-value} of Levy, et al.~\cite{levy2003modelling}, and in fact do not assume a call-by-value evaluation order, because we want to cast as wide a net as possible for the interesting rewrite systems that can be proven strongly-normalizing with our technique. In fact, one of our motivating examples (NRC) benefits from allowing rewrites in arbitrary position and from allowing arbitrary term-term applications (in distinction to FGCBV).

Here is the grammar of our metalanguage:
\begin{eqnarray*}
s, t, u &::=& x \mid \lambda x. u \mid \effpure(t) \mid st \mid \gamma(t_1, ..., t_n) \mid \kwlet x \be t \kwin u\\
\gamma &::=& e \mid f & (rewritable symbols)\\
e, e', e'' &&& \text{(effect symbols)}\\
f, g && &\text{(function symbols)}%
\end{eqnarray*}
We distinguish two classes of rewritable symbols: the effect symbols and function symbols. These two classes have their own typing rules, but are often treated the same in the rewrite theory, so we use $\gamma$ to range over both, $e$ (and its primes) to range over effect symbols, and $f, g, ...$ to range over non-effect function symbols. Both subclasses are subject to some metalanguage rewrites as motivated by the Plotkin-Power framework.

Types are defined by this grammar:
\begin{eqnarray*}
S,\, T &::=& B \mid \kw{E}(T) \mid S \to T\\
B &&\text{(basic types)}
\end{eqnarray*}%
The language has a special type constructor for effect types: $\kw{E}(T)$ is the type of computations that may be effectful, returning a value of type $T$. The type of functional abstractions is $S\to T$.

Typing rules are given in Figure~\ref{fig:typing-rules}. To give types to the rewritable symbols, we assume a signature $\Sigma$ which maps each effect symbol to an arity and each function symbol to a type signature of the form $S_1 \times \cdots \times S_n \to T$. Throughout the paper we will use series of terms like $s_1,\,\dots,\,s_n$ or a vector notation $\vec s$ interchangeably. There may be a distinct term called $s$ in such a context.

\begin{figure}
\caption{Typing rules}
\begin{mathpar}
\inferrule{\Gamma, x : S \vdash u : T}
        {\Gamma \vdash \lambda x. u : S \to T}
\and
\inferrule{\Gamma \vdash  s : S \to T \\ \Gamma \vdash  t : S}
         {\Gamma \vdash st : T}
\and
\inferrule{\Gamma \vdash t : \kw{E}(S) \\ \Gamma, x : S \vdash u : \kw{E}(T) }
          {\Gamma \vdash \kwlet x \be t \kwin u : \kw{E}(T)}
\and
\inferrule{\Gamma \vdash  t : T}
          {\Gamma \vdash  \effpure(t) : \kw{E}(T)}
\and
\inferrule{\text{for each $t_i$,\;} \Gamma \vdash  t_i : \kw{E}(T) \\ \Sigma(e) = n}
          {\Gamma \vdash  e(t_1,\, \dots,\, t_n) : \kw{E}(T)}
\and
\inferrule{\text{for each $s_i$,\;} \Gamma \vdash  s_i : S_i \\ \Sigma(f) = S_1 \times \cdots \times S_n \to T}
          {\Gamma \vdash  f(s_1,\, \dots,\, s_n) : T}
\end{mathpar}
\label{fig:typing-rules}
\end{figure}

\begin{figure}
\caption{Rewrite rules of the metalanguage.}
\begin{eqnarray*}
(\lambda x. u)t &\rwml& t\{u/x\} & (abs-$\beta$)\\
\kwlet x \be \effpure(t) \kwin u &\rwml& t\{u/x\}  & (let-$\beta$)\\
\kwlet y = (\kwlet x \be t_1 \kwin t_2) \kwin u & \rwml&
    \kwlet x \be t_1 \kwin \kwlet y \be t_2 \kwin u\hspace{2em}  & when $x\not\in FV(u)$\\
    &&& (let-assoc)\\
\kwlet x \be e(t_1,\, ...,\, t_n) \kwin u &\rwml& e(t_1',\, ...,\, t_n') & (eff-assoc)\\
  &&\hbox{\quad where $t_i' = \kwlet x \be t_i \kwin u$}%
\end{eqnarray*}
\label{fig:rewrite-ml}
\end{figure}
Rewrite rules for the metalanguage are given in Figure~\ref{fig:rewrite-ml}.
The eff-assoc rule is adapted from a similar equation in Plotkin and Pretnar~\cite{plotkin2008logic}. Significantly, eff-assoc does \emph{not} apply to function symbols, only effect symbols. We omit eta-reduction, which is found in Moggi~\cite{moggi1988computational} and in much of the other work, but it doesn't serve our proof or our examples.

Having prepared the metalanguage as a broth, the soup will be made by adding further rewrite rules, all algebraic in nature, and specific to a domain area, as shown in the Examples. So, assume we are given an underlying rewrite system, defined by a signature $\Sigma$ of rewritable symbols with arities and their rewrite rules $l \rw r$. We require that $l$ and $r$ are symbolic terms, rather than arbitrary terms:
\[l, r ::= x \mid \gamma(l_1,\, l_2,\, ...,\, l_n)\]
Thus, the \emph{symbolic} rewrite system is ignorant of the let-construct, of lambda abstractions, and of applications.

\paragraph*{Rewriting contexts}

For this work, we assume that all the rewrite relations (including those marked $\rw$, $\rwml$, $\succ$ and $\dotsucc$) are compatibly-closed, so they can be applied in any term context. This is a fairly standard assumption for rewrite systems.

In some application areas, one may wish to constrain the eligible rewrite contexts, for example to a call-by-value evaluation order. Doing so is normal in the study of computational effects. But by allowing rewrites anywhere, our result remains more general.

\section{The Recursive Path Ordering Defined}

Now we define the recursive path ordering $\succ$, which relates terms of the symbolic part of the language, and which is the key tool of the normalization proof.

The RPO is defined with respect to an ordering $>_\Sigma$ on rewritable symbols. We will write $>$ for the symbol ordering when it is clear from context. The relation $\succ$ is extended to a lexicographical ordering on a sequence of terms by writing $\succ_{lex}$. (The (HO)RPO usually allows symbols whose args are ordered either by a multiset ordering or a lexicographical ordering. Presently we only defines the lexicographical one.)
We use $\succeq$ for the union of $\succ$ with $=$.
\begin{definition}[RPO]
Define $s \succ t$ to hold when one of the following does:
\begin{enumerate}
\item $s = \gamma(s_1,\, \dots,\, s_n)$,
      $t = \gamma(t_1,\, \dots,\, t_n)$ and
      $\vec s \succ_{lex} \vec t$ and
      for all $j$, $s \succ t_j.$
\item $s = \gamma(s_1,\, \dots,\, s_m)$, $t = \gamma'(t_1,\, \dots,\, t_n)$ and $\gamma > \gamma'$ and for all $j$, $s \succ t_j.$
\item $s = \gamma(s_1,\, \dots,\, s_m)$ and for some $i$, $s_i \succeq t.$
\end{enumerate}
\end{definition}
It is easy to miss that this relation is inductively defined, and there is a base case hidden in case (3), in the $=$ part of the $\succeq$ relation. All derivations of the RPO end in leaves which are assertions of the right-hand term being equal to an immediate subterm of the left-hand term.

Let's take a moment to understand the purpose of this step intuitively, and where it fits in the larger proof. The $\succ$ relation essentially captures a large class of terminating rewrite systems that could be defined for a given effect-signature $\Sigma$ and an ordering among the symbols. The ordering will be specific to the particular rewrite system, but the $\succ$ relation abstracts slightly from the rewrite rules themselves. It is usually a superset of the relation $(\rw)$ of interest, so $(\rw) \subseteq (\succ)$. The user must also check their rewrite rules ($\rw$) do in fact meet the above criteria (qualifying as a $(\succ)$ relation), but this is often easy to do, and one then gets a big termination proof ``for free,'' as it were.

The $\succ$ relation abstracts only the symbol-rewriting rules; to extend it through the metalanguage, we define \[(\dotsucc) \triangleq (\succ) \cup (\rwml).\] And this $\dotsucc$ is the relation for which we will prove strong normalization. As a result, the target calculus where $(\rw) \cup (\rwml)$ is the relation of interest must also be strongly normalizing.

We can also present the RPO in terms of two powerful inference rules:
\begin{mathpar}
\inferrule[rpo-subterm]{s_i \succeq t \text{ for some $s_i$ } }
          {\gamma(s_1,\, \dots,\, s_n) \succ t}
\and
\inferrule[rpo-symbol]{(\gamma, \vec s) >_{RPO} (\gamma', \vec t) \\ s \succ t_i \text{ for each $t_i$}}
          {\gamma(s_1,\, \dots,\, s_n) \succ \gamma'(t_1,\, \dots,\, t_m)}
\end{mathpar}
Where the $>_{RPO}$ ordering is defined as a lexicographical ordering with the components $>_\Sigma$ and $\succ_{lex}$.

In what follows, a \emph{reduct} of $t$ is a term $t'$ for which $t\dotsucc t'$, and
we write $SN(t)$ if $t$ strongly normalizes under the relation $\dotsucc$. When a term is strongly normalizing, we can perform induction on its reduction tree (we only use this for the reduction tree under ($\dotsucc$), not the other term relations); to invoke this principle we will write ``induction on $t$, ordered by $\dotsucc$.'' When we have several normalizing terms handy, we might use simultaneous induction on all of them, where the proposition is assumed to hold for the group where any one is reduced.

\subsection{Continuations}

A key difficulty in the proof is showing strong normalization in the presence of the let-assoc rule, which reorganizes the term in a progress-making way, but does not make it smaller. Thus we need a 
construct to allow tracking and inducting on that progress.
In the let-assoc rule, $\kwlet x \be (\kwlet y \be s_1 \kwin s_2) \kwin s_3$ has a let on its inner left-hand side, and we want to see that go away. The reduct of this form, $\kwlet y \be s_1 \kwin \kwlet x \be s_2 \kwin s_3$, has the let form on its inner right-hand-side, which is closer to a normal form. We have decreased the number of let-forms that are in the left-hand-sides of other let-forms.
Therefore, following Lindley and Stark~\cite{lindley2005reducibility}, we define continuations, which are a stack of the ``wrong kind'' of let context---the $\kwlet x \be \hole \kwin s_3$ of the foregoing explanation. A let-assoc rewrite will reduce the size of this stack, so in inductive proofs we have a way to prove progress is being made.

Define continuations $K$ as follows:
\begin{eqnarray*}
K &::=& \epsilon \mid K \circ F \\
F &::=& \kwlet x \be \hole \kwin u
\end{eqnarray*}
Write $F[t]$ for the term that results from filling the hole in $F$ with $t$. Write $K@t$ for that which results from filling the holes recursively, so $\epsilon@t = t$ and $K\circ F@t = K@F[t]$. Write $|K|$ for the number of frames in $K$, so $|\epsilon| = 0$, $|K\circ F| = 1 + |K|$.

Write $K \dotsucc K'$ if $K@x \dotsucc K'@x$ for some fresh $x$. Note that for an individual frame, the only rewrites are of the form $\kwlet x \be \hole \kwin u \dotsucc \kwlet x \be \hole \kwin u'$ with $u \dotsucc u'$.

\section{Strong Normalization}

Following Lindley-Stark, define a type-indexed reducibility predicate as follows:
\begin{eqnarray*}
  \Red B(t) &\triangleq& SN(t) \\
  \Red {S\to T}(s) &\triangleq& \Red T(st) \hbox{ for all $t$ where $\Red S(t)$} \\
  \Red {\kw{E}(T)}(t) &\triangleq& SN(K@t) \hbox{ for all $K$ where $\Red T^\top(K)$} \\
  \Red {T}^\top(K) &\triangleq& SN(K@\effpure(t)) \hbox{ for all $t$ where $\Red T(t)$}
\end{eqnarray*}
We will write, equivalently, $t \in \Red T$ or $\Red T(t)$ as is convenient for the prose.

Define the set of \emph{neutral} terms to be these:
\begin{itemize}
\item An application, $st$.
\item A variable, $x$.
\end{itemize}

We have four standard properties of $\Red-$ with standard proofs.

\begin{lemma}\label{lem:red-inhab}
For any $T$, $\Red T$ is inhabited by some term.
\end{lemma}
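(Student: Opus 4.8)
The plan is to prove this by induction on the structure of the type $T$, exhibiting a canonical inhabitant at each type. The natural choice of inhabitant is a variable, and the argument dovetails with the companion neutral-terms property (the CR3-style fact that a neutral term all of whose $\dotsucc$-reducts are reducible is itself reducible), so I would either prove inhabitation simultaneously with that property by a single induction on types, or invoke it where the arrow case demands it.

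First the easy cases. For a basic type $B$, a variable $x$ has no $\dotsucc$-reducts at all: the RPO $\succ$ only relates terms whose head is a symbol $\gamma(\dots)$, and every $\rwml$ rule has a left-hand side that is an application or a $\kwlet$, so no rule fires on a bare variable and none fires in any subterm (there are none). Hence $x$ is a normal form, so $SN(x)$ and $x \in \Red B$. For a computation type $\kw{E}(T)$, the $\top\top$-lifting makes inhabitation immediate from the inductive hypothesis: let $u_0 \in \Red T$, which exists by IH, and take $\effpure(u_0)$. For any continuation $K$ with $\Red T^\top(K)$, the definition of $\Red T^\top$ applied to $u_0 \in \Red T$ gives exactly $SN(K@\effpure(u_0))$; since this holds for every such $K$, we get $\effpure(u_0) \in \Red{\kw{E}(T)}$.

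The arrow case $S \to T$ is the crux. Here I would show that the variable $x$ is reducible, i.e.\ that $xt \in \Red T$ for every $t$ with $\Red S(t)$; a witness $t_0 \in \Red S$ from the IH then yields the inhabitant $x\,t_0$. The term $xt$ is neutral, so I appeal to the neutral-terms property at type $T$: it suffices to check that every $\dotsucc$-reduct of $xt$ is reducible. Crucially, because the head $x$ is a variable and not a $\lambda$-abstraction, no abs-$\beta$ redex is present, so the only reducts are of the form $x\,t'$ with $t \dotsucc t'$. By closure of $\Red S$ under reduction we have $t' \in \Red S$, and an inner induction on the $\dotsucc$-reduction tree of $t$ (well-founded since $\Red S \subseteq SN$) shows each such $x\,t'$ is reducible; the neutral-terms property then delivers $xt \in \Red T$.

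The main obstacle is precisely this mutual dependence in the arrow case: inhabitation for $S \to T$ leans on closure-under-reduction for $S$ and on the neutral-terms property at $T$, both of which are themselves established by induction on types. The clean way to manage it is to fold these facts into one simultaneous induction on $T$ (as in the Lindley--Stark development), so that the reducibility-implies-$SN$, closure-under-reduction, neutral-terms, and inhabitation statements are all available at the strictly smaller types $S$ and $T$ exactly when the arrow case needs them. No new difficulty arises from the effect symbols or the let-machinery here, since the witnesses $x$ and $\effpure(u_0)$ sidestep both.
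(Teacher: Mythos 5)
Your proof is correct, and in two of the three cases it coincides exactly with the paper's (a free variable at base type, $\effpure(u_0)$ at type $\kw{E}(T)$); the arrow case, however, takes a genuinely different route. The paper's inhabitant of $\Red{S\to T'}$ is the abstraction $\lambda x.\, t$, where $t$ is the inductively obtained inhabitant of $\Red{T'}$; justifying that this term is reducible is essentially Lemma~\ref{lem:lam-red} specialized to a body not containing $x$. Your inhabitant is the bare variable $x$, justified in Girard's classical style by the neutral-terms lemma at $T$, closure under reduction at $S$, and reducibility-implies-$SN$ at $S$ (in fact, since variables are neutral and have no $\dotsucc$-reducts, a single application of the neutral-terms lemma at $S \to T$ already gives $x \in \Red{S\to T}$ vacuously, without unfolding the definition or needing $\Red S$ to be inhabited at all). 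What your route buys is uniformity: one witness works at every type, and inhabitation becomes a corollary of the neutral-terms property. What it costs is that you cannot cite the paper's lemmas as stated, since Lemma~\ref{lem:red-sn} itself invokes inhabitation and the neutral-terms lemma invokes Lemma~\ref{lem:red-sn}; you identify exactly this circularity and resolve it correctly by folding inhabitation, reducibility-implies-$SN$, closure under reduction, and the neutral-terms property into one simultaneous induction on types, which is well-founded because each case uses the properties only at strictly smaller types. The paper sidesteps the restructuring by placing inhabitation first with the $\lambda x.\, t$ witness, though its own arrow case quietly leans on reasoning (Lemma~\ref{lem:lam-red}) proved later, so your explicit handling of the dependency is, if anything, the more careful account. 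One cosmetic slip: ``yields the inhabitant $x\,t_0$'' should read ``the inhabitant $x$'' --- the term $x\,t_0$ inhabits $\Red{T}$, whereas the lemma wants an inhabitant of $\Red{S\to T}$, which is $x$ itself.
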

\begin{proof}
By induction on $T$.
\begin{itemize}
  \case $B$. A term which is a free variable satisfies this.
  \case $S \to T'$. By IH we have a term $t \in \Red {T'}$ and so $\lambda x. t \in \Red {S \to T'}$.
  \case $\kw E(T')$. By IH we have a term $t \in \Red {T'}$ and so $pure(t) \in \Red {\kw E(T')}$. \qedhere
\end{itemize}
\end{proof}

\begin{lemma}\label{lem:red-sn}
For any $T$, $t$, if $t \in \Red T$ then $t$ strongly normalizes.
\end{lemma}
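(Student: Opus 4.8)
The plan is to proceed by induction on the type $T$, in the style of the Tait--Girard development. Unlike the full package of reducibility properties, this single statement needs no simultaneous induction with a neutral-terms lemma: a plain induction on $T$ suffices, drawing on Lemma~\ref{lem:red-inhab} to supply the witnesses we need. The base case is immediate, since $\Red B(t)$ is \emph{defined} to be $SN(t)$. All the work therefore lives in the two inductive cases, where the idea is uniform: recover $SN(t)$ from strong normalization of some larger term that the predicate $\Red T$ itself furnishes.

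For the arrow case, suppose $s \in \Red{S\to T}$. By Lemma~\ref{lem:red-inhab} there is some $t_0 \in \Red S$, and then the definition of reducibility at an arrow type gives $s t_0 \in \Red T$. The induction hypothesis at type $T$ yields $SN(s t_0)$. Since the rewrite relations are compatibly closed, any infinite reduction $s \dotsucc s_1 \dotsucc \cdots$ would lift to an infinite reduction $s t_0 \dotsucc s_1 t_0 \dotsucc \cdots$, contradicting $SN(s t_0)$; hence $SN(s)$.

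For the effect case, suppose $t \in \Red{\kw E(T)}$. By definition this means $SN(K@t)$ for every continuation $K$ with $\Red T^\top(K)$, so it suffices to exhibit one such $K$ with $K@t = t$, namely the empty continuation $\epsilon$. We must check $\Red T^\top(\epsilon)$, i.e.\ that $SN(\epsilon@\effpure(s)) = SN(\effpure(s))$ whenever $s \in \Red T$. The induction hypothesis at type $T$ makes such $s$ strongly normalizing; and since $\effpure$ is merely a unary constructor — $\effpure(s)$ is not itself the left-hand side of any rewrite, so its only reducts arise from reducing $s$ — strong normalization passes from $s$ to $\effpure(s)$. Thus $\epsilon \in \Red T^\top$, and instantiating the universal quantifier over $K$ at $\epsilon$ gives $SN(\epsilon@t) = SN(t)$.

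The step I expect to require the most care is the effect case, because it is the only one that reaches across to the ``dual'' reducibility predicate $\Red T^\top$ on continuations: the conclusion $SN(t)$ is not read off directly but extracted by instantiating the quantified $K$ at $\epsilon$, which is legitimate only after verifying that $\epsilon$ genuinely qualifies as a reducible continuation. That verification leans on the small but essential observation that $\effpure$ introduces no new redexes, so strong normalization transfers from $s$ to $\effpure(s)$ unimpeded.
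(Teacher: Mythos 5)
Your proof is correct and follows essentially the same route as the paper's: induction on $T$, with the base case immediate from the definition, the arrow case using Lemma~\ref{lem:red-inhab} to supply an argument and extracting $SN(s)$ from $SN(st_0)$, and the effect case extracting $SN(t)$ from strong normalization of a filled continuation. If anything, your treatment of the effect case is more careful than the paper's one-line remark, since you explicitly exhibit $\epsilon$ as a reducible continuation and verify $\Red T^\top(\epsilon)$ via the observation that $\effpure$ creates no head redexes — a step the paper leaves implicit.
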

\begin{proof}
By induction on $T$ and appeal to the $\Red T$ definition. In the $\kw E(T')$ case, $t$ is a subterm of something directly asserted to be SN. In the $T_1 \to T_2$ case, the IH gives strong normalization of a term which has $t$ as a subterm. We need that $\Red S$ is inhabited, which Lemma~\ref{lem:red-inhab} shows.
\end{proof}

\begin{lemma}
For any $s \in \Red T$ with $s \dotsucc s'$, we have $s' \in \Red T$.
\end{lemma}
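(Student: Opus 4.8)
The plan is to proceed by induction on the structure of the type $T$, appealing in each case to the definition of $\Red T$ and to a single fundamental fact about strong normalization: if $SN(u)$ and $u \dotsucc u'$, then $SN(u')$, since any infinite $\dotsucc$-reduction starting from $u'$, prefixed by the step $u \dotsucc u'$, would yield one from $u$, contradicting $SN(u)$. The other ingredient I would use repeatedly is that $\dotsucc$ is compatibly closed, so a step $s \dotsucc s'$ can be transported into any surrounding context; in particular $st \dotsucc s't$ and $K@s \dotsucc K@s'$.

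For the base case $T = B$, the hypothesis $\Red B(s)$ unfolds to $SN(s)$, and together with $s \dotsucc s'$ the fundamental fact gives $SN(s')$, i.e.\ $\Red B(s')$, with no recourse to the induction hypothesis. For $T = \kw{E}(T')$, suppose $\Red{\kw{E}(T')}(s)$ and fix an arbitrary continuation $K$ with $\Red{T'}^\top(K)$; by definition $SN(K@s)$, and since $K@s \dotsucc K@s'$ by compatible closure, the fundamental fact yields $SN(K@s')$. As $K$ was arbitrary, $\Red{\kw{E}(T')}(s')$. Notably this case also bypasses the induction hypothesis, resting only on preservation of $SN$ under reduction inside the hole of $K$.

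The only case that genuinely consumes the induction hypothesis is the arrow type $T = S \to T'$. Assuming $\Red{S \to T'}(s)$, I would take any $t$ with $\Red S(t)$; then $\Red{T'}(st)$ by definition. Compatible closure gives $st \dotsucc s't$, and since $T'$ is a strict subformula of $T$, the induction hypothesis applies to $st$ and its reduct $s't$ at type $T'$, yielding $\Red{T'}(s't)$. Quantifying over all such $t$ gives $\Red{S \to T'}(s')$.

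I do not expect a serious obstacle; the argument is essentially bookkeeping over the reducibility clauses. The one point worth stating carefully is the appeal to compatible closure to move the reduction into the relevant position --- as the operator of an application, or into the hole of a continuation $K$ --- which is precisely what lets the two non-arrow cases collapse to the bare preservation of $SN$. I am also tacitly assuming that $\dotsucc$ preserves well-typedness, so that $s't$ and $K@s'$ remain terms at the types at which the predicates are indexed.
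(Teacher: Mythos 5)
Your proof is correct and follows essentially the same route as the paper's: induction on $T$, with the base and $\kw{E}(T')$ cases resting on preservation of $SN$ under a (compatibly closed) reduction step, and only the arrow case invoking the induction hypothesis via $st \dotsucc s't$. The extra care you take in spelling out compatible closure and type preservation is implicit in the paper but changes nothing substantive.
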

\begin{proof}
By induction on $T$.
\begin{itemize}
\case $B$. $s'$ merely needs to be SN, and it is by virtue of being a reduct of $s$.
\case $T_1 \to T_2$. To show that $s't \in \Red {T_2}$ for any $t\in\Red {S_1}$. We have that $st \in \Red{T_2}$. But $st \dotsucc s't$ so the conclusion follows from the IH.
\case $\kw E(T')$. To show that $K@s' \in SN$ for $K \in \Red T^\top$. Again, $K@s$ reduces to $K@s'$ and since the former is SN, the latter is too.
\qedhere
\end{itemize}
\end{proof}

\begin{lemma}
  Given a neutral $s$, if each of its reducts is in $\Red T$ then $s$ is in $\Red T$.
\end{lemma}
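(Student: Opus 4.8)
The plan is to prove this \emph{neutrality} property (the ``CR3'' clause of a reducibility candidate) by induction on the type $T$, following the three cases of the reducibility predicate. The single fact that does the real work is that a neutral term---a variable or an application $st$---cannot be the principal term of any top-level rewrite: a variable has no reducts at all, and an application $st$ can only $\beta$-reduce at the top when its head is a $\lambda$-abstraction, which a neutral term is not. Consequently, whenever a neutral $s$ is placed in an elimination context, every one-step reduct of the compound term arises either by reducing $s$ to one of its own reducts $s'$ (which the hypothesis already places in $\Red{T}$) or by reducing the surrounding context, with no new redex created at the interface. I would isolate this observation first, since each case below is just an application of it together with the definition of $SN$ (a term is $SN$ iff all of its one-step reducts are).

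The base case $T = B$ is immediate: $\Red{B}(s)$ is just $SN(s)$, and since every reduct of $s$ lies in $\Red{B}$ and is therefore $SN$, so is $s$. For the arrow case $T = S \to T'$, I must show $st \in \Red{T'}$ for an arbitrary $t \in \Red{S}$. Here $st$ is itself neutral, so by the outer induction hypothesis at the smaller type $T'$ it suffices to check that every reduct of $st$ is in $\Red{T'}$. By the observation above these reducts are exactly the terms $s't$ (with $s \dotsucc s'$) and $st'$ (with $t \dotsucc t'$). For $s't$ the hypothesis gives $s' \in \Red{S \to T'}$, so $s't \in \Red{T'}$ by the definition of reducibility at an arrow type. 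For $st'$ the reduction-closure lemma gives $t' \in \Red{S}$, and an \emph{inner} induction on $t$---legitimate because $t \in \Red{S}$ is $SN$ by Lemma~\ref{lem:red-sn}---yields $st' \in \Red{T'}$. This nested induction on the $SN$ argument is routine.

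The computation case $T = \kw E(T')$ is where the difficulty lies, and where the continuation machinery earns its keep. I must show $SN(K@s)$ for every $K$ with $\Red{T'}^\top(K)$. The term $K@s$ places $s$ in the subject position of the innermost $\kwlet$, and the crucial point is again neutrality: because $s$ is neither a $\effpure(\cdot)$, nor an effect application, nor itself a $\kwlet$, none of let-$\beta$, eff-assoc, or the innermost let-assoc can fire at the frame enclosing $s$. Hence every reduct of $K@s$ is either $K@s'$ for a reduct $s'$ of $s$, or $K'@s$ for a continuation reduct $K \dotsucc K'$ (the latter absorbing let-assoc between adjacent frames as well as reductions inside the frame bodies). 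I would then prove $SN(K@s)$ by well-founded induction on $K$ under $\dotsucc$: reducts of the first kind are handled directly, since $s' \in \Red{\kw E(T')}$ by hypothesis and $\Red{T'}^\top(K)$ gives $SN(K@s')$; reducts of the second kind are handled by the induction hypothesis.

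Two supporting facts make the induction on $K$ go through, and checking them is the main obstacle. First, $\dotsucc$ on continuations is well-founded here: taking any inhabitant $t_0 \in \Red{T'}$ (Lemma~\ref{lem:red-inhab}), $\Red{T'}^\top(K)$ gives $SN(K@\effpure(t_0))$, and since $K \dotsucc K'$ lifts to $K@\effpure(t_0) \dotsucc K'@\effpure(t_0)$, any infinite descent on $K$ would contradict this. Second, $\Red{T'}^\top$ is closed under reduction of its continuation: if $\Red{T'}^\top(K)$ and $K \dotsucc K'$, then $K'@\effpure(t)$ is a reduct of the $SN$ term $K@\effpure(t)$ for each $t \in \Red{T'}$, so $\Red{T'}^\top(K')$---this is what lets the induction hypothesis apply to $K'$. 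The one genuinely fiddly step is the careful enumeration of the reducts of $K@s$, i.e.\ verifying that neutrality really does block every structural let-rewrite at the $s$-interface and that the surviving restructurings are precisely the continuation reductions $K \dotsucc K'$; everything else is bookkeeping.
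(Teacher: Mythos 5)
Your proof is correct and follows essentially the same route as the paper's: induction on the structure of $T$, using neutrality to rule out head-redexes, an inner induction on the reduction tree of $t$ in the arrow case, and an induction on $K$ under $\dotsucc$ in the $\kw{E}(T')$ case. In fact you make explicit two points the paper leaves implicit---the well-foundedness of the induction on $K$ (via Lemma~\ref{lem:red-inhab}) and the closure of $\Red{T'}^\top$ under $\dotsucc$, which is needed to apply the inductive hypothesis to $K'$---so no concerns.
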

\begin{proof}
By induction on the structure of $T$.
\begin{itemize}
\case $B$ Since the reducts are in $\Red B$, they are in SN, and this satisfies the definition of $\Red B$.
\case $S\to T'$ To show that $st \in \Red {T'}$ for each $t\in \Red S$. We have that $t$ is SN; proceed by induction on the reduction tree of $t$. Examine reductions of $st$. Since $s$ is neutral, it is not a $\lambda$-abstraction, so there is no $\beta$-reduction at the head. The only reducts are $s't$ (where $s\dotsucc s'$) and $st'$ (where $t\dotsucc t'$). In the first case, the lemma hypothesis is sufficient. In the second case, the inner IH is sufficient.
\case $\kw{E}(T')$ Given $K \in \Red {T'}^\top$, we want to show $SN(K@s)$. By induction on $K$. Since $s$ is neutral, the only reducts are 
$K'@s$ (where $K \dotsucc K'$) and
$K@s'$ (where $s \dotsucc s'$).
(Note there is no meta\-language rule rewriting the frame $F$ into the application $st$ in $F[st]$ and one cannot be supplied by the symbol-rewrites.) In the first case, the inner IH is sufficient. In the second case, the lemma hypothesis is sufficient.\qedhere
\end{itemize}
\end{proof}

\begin{lemma} \label{lem:unsubst-red}
  If $SN(u\{t/x\})$ then $SN(u)$.
\end{lemma}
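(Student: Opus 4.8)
The plan is to reduce the statement to the single fact that reduction is stable under substitution, and then to run a well-founded induction. Concretely, I would first isolate the key lemma: if $u \dotsucc u'$ then $u\{t/x\} \dotsucc u'\{t/x\}$, so that every one-step reduct of $u$ lifts to a genuine one-step reduct of $u\{t/x\}$. Granting this, the result follows by well-founded induction on the reduction tree of $v \triangleq u\{t/x\}$, which is well-founded precisely because $SN(u\{t/x\})$ holds (using the principle ``induction on $v$, ordered by $\dotsucc$''). In the inductive step, take any reduct $u \dotsucc u'$; the key lemma gives $v = u\{t/x\} \dotsucc u'\{t/x\}$, so $u'\{t/x\}$ is a strictly smaller, still strongly-normalizing node of the tree, and the induction hypothesis (instantiated at $u'$) yields $SN(u')$. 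Since every reduct of $u$ is strongly normalizing, so is $u$. Equivalently, one could argue by contradiction, mapping an infinite $\dotsucc$-sequence out of $u$ step-by-step to an infinite sequence out of $u\{t/x\}$.

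The work therefore all sits in the key lemma, which I would prove by cases on the rule witnessing $u \dotsucc u'$. Because $\dotsucc$ is compatibly closed and substitution distributes through contexts (after $\alpha$-renaming every binder above the redex away from $x$ and $\FV(t)$, so no capture occurs), it suffices to treat the contracted redex. For the metalanguage steps $\rwml$ the redex shape is preserved and the contractum is the expected one: abs-$\beta$ and let-$\beta$ follow from the substitution-composition identity $u\{s/y\}\{t/x\} = u\{t/x\}\{s\{t/x\}/y\}$, and let-assoc and eff-assoc preserve their shapes while their side conditions (such as $x \notin \FV(u)$) survive because $\alpha$-renaming keeps bound names disjoint from the substitution. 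In each case the step remains a genuine, non-collapsing reduction.

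The one place needing care---and the step I expect to be the main obstacle---is the $\succ$ (RPO) part of $\dotsucc$. Here $u = C[a]$ and $u' = C[a']$ with $a \succ a'$ on \emph{symbolic} subterms, yet $a\{t/x\}$ need no longer be symbolic, since $t$ may be a $\lambda$-abstraction, a let, or an application. I would show $a\{t/x\} \succ a'\{t/x\}$ by lifting the RPO derivation of $a \succ a'$, replacing the variable $x$ by $t$ at the leaves where it occurs. The crucial observation is that such a derivation never inspects the interior of $t$: every clause of the RPO has a $\gamma$-headed left-hand side, so an occurrence of the variable $x$ on the left can be matched only by equality, while an occurrence of $x$ on the right is reached only through case-(3) subterm chains that bottom out at a leaf equality $x = x$. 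Substituting $t$ for $x$ turns each such leaf into the reflexive $t = t$ and leaves every $\gamma$-headed comparison intact, so the identical derivation witnesses $a\{t/x\} \succ a'\{t/x\}$. Hence RPO is stable under substitution even by non-symbolic terms, and since $\succ$ is irreflexive the lifted step does not collapse---exactly what is needed to close the case and complete the key lemma.
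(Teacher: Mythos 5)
Your proposal is correct and takes essentially the same route as the paper: the paper's (two-sentence) proof is precisely your key lemma --- every reduction step of $u$ has an analogue in the reduction tree of $u\{t/x\}$ --- followed by the observation that the tree for $u$ therefore embeds in that of $u\{t/x\}$ and so cannot be infinite. The substitution-stability of $\dotsucc$, including the RPO case you work through, is left implicit in the paper, so your elaboration is added detail rather than a different argument.
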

\begin{proof}
  Constructively, every reduction in the reduction tree of $u$ has an analogue in that of $u\{t/x\}$. As a consequence, the tree for $u$ can be no larger than that of the other term, and cannot be divergent when the latter is convergent.
\end{proof}

Now we show that each term-former can construct a reducible term, given appropriate conditions.

\begin{lemma}\label{lem:app-red}
  If $\Red {S\to T}(s)$ and $\Red S(t)$ then $\Red T (st)$.
\end{lemma}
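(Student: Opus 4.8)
The plan is simply to appeal to the definition of the reducibility predicate at arrow type. Recall that $\Red{S\to T}(s)$ is \emph{defined} to mean exactly that $\Red T(st)$ holds for every $t$ satisfying $\Red S(t)$. The lemma is therefore little more than a restatement of this clause of the definition, recast in a form convenient for later appeals.

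Concretely, I would unfold the first hypothesis $\Red{S\to T}(s)$ into the universal statement it abbreviates: for all $t$ with $\Red S(t)$, we have $\Red T(st)$. Instantiating this at the particular $t$ supplied by the second hypothesis---which by assumption satisfies $\Red S(t)$---yields $\Red T(st)$ at once, which is the desired conclusion.

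There is no genuine obstacle here: the content is entirely definitional, with no induction, no case analysis, and no side conditions to discharge, precisely because the reducibility predicate at function type is phrased directly in terms of application. This is the easy member of the family of construction lemmas, one per term-former; the real difficulty will appear in the clauses for the let-construct and for the effect symbols, where progress under the let-assoc rewrite must be tracked through the continuation stacks $K$ introduced earlier.
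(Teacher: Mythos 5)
Your proof is correct and matches the paper's, which also dismisses the lemma as immediate from the definition of $\Red{S\to T}$: unfolding the hypothesis and instantiating the universal quantifier at $t$ is all that is needed.
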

\begin{proof}
  Immediate from the definition of $\Red {S\to T}$.
\end{proof}

\begin{lemma}\label{lem:lam-red}
  If $\Red T(u\{t/x\})$ for every $t$ in $\Red S$ then $\Red {S\to T}(\lambda x. u)$.
\end{lemma}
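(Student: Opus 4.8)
The plan is to unfold the definition of reducibility at an arrow type and reduce the problem to a statement about a neutral term. By definition, $\Red{S\to T}(\lambda x. u)$ means $\Red T((\lambda x. u)t)$ for every $t$ with $\Red S(t)$. Since $(\lambda x. u)t$ is an application, it is neutral, so by the lemma on neutral terms it suffices to show that every $\dotsucc$-reduct of $(\lambda x. u)t$ lies in $\Red T$. The only subtlety is that the reducts obtained by rewriting inside the body $u$ or the argument $t$ are not covered by the hypothesis directly, so I will need an induction to reach them.

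To set up that induction I first establish strong normalization of both $u$ and $t$. For $t$ this is immediate from Lemma~\ref{lem:red-sn}, since $\Red S(t)$. For $u$, I use Lemma~\ref{lem:red-inhab} to choose some $t_0 \in \Red S$; the hypothesis then gives $\Red T(u\{t_0/x\})$, which is $SN$ by Lemma~\ref{lem:red-sn}, and Lemma~\ref{lem:unsubst-red} lifts this to $SN(u)$ (this is precisely the use for which Lemma~\ref{lem:unsubst-red} seems intended). With both $u$ and $t$ strongly normalizing, I proceed by simultaneous induction on their $\dotsucc$-reduction trees, aiming to prove $\Red T((\lambda x. u)t)$, where the induction hypothesis is allowed to replace $u$ by any reduct $u'$ that still satisfies the lemma's premise, or $t$ by any reduct $t'$ still in $\Red S$.

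It then remains to classify the reducts of $(\lambda x. u)t$. Because the head is an application of an abstraction and no symbolic rule matches a term of this shape at the root, the only root-level step is the abs-$\beta$ redex, giving $u\{t/x\}$, which is in $\Red T$ by the hypothesis applied to $t \in \Red S$. The remaining reducts are $(\lambda x. u')t$ with $u \dotsucc u'$, and $(\lambda x. u)t'$ with $t \dotsucc t'$. In the latter, $t' \in \Red S$ by the closure-under-reduction lemma, the premise on $u$ is untouched, and the induction hypothesis (with $t$ reduced) yields $\Red T((\lambda x. u)t')$. In the former, I must first re-establish the premise for $u'$: for every $v \in \Red S$ we have $u\{v/x\} \dotsucc u'\{v/x\}$, and since $\Red T(u\{v/x\})$ holds by the premise on $u$, the closure-under-reduction lemma gives $\Red T(u'\{v/x\})$; the induction hypothesis (with $u$ reduced) then yields $\Red T((\lambda x. u')t)$. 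Having shown all reducts reducible, the neutral-terms lemma delivers $\Red T((\lambda x. u)t)$, and since $t$ was arbitrary in $\Red S$, the definition gives $\Red{S\to T}(\lambda x. u)$.

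The step I expect to be the main obstacle is the re-establishment of the premise for the reduced body $u'$, which rests on the fact that $\dotsucc$ is stable under substitution, i.e.\ $u \dotsucc u'$ implies $u\{v/x\} \dotsucc u'\{v/x\}$. For the $\rwml$ fragment this follows from the ordinary substitution lemma (a metalanguage redex in $u$ survives substitution and contracts to the substituted reduct, using that $x$ differs from the bound variables by the usual convention), but for the $\succ$ fragment it is the standard stability of the recursive path ordering under substitution, and I would want to state that stability property explicitly before relying on it. A lesser point to verify carefully is that the three listed reducts are exhaustive — in particular that no symbolic rewrite applies at the head of $(\lambda x. u)t$ — which holds because that head is an application with an abstraction in function position.
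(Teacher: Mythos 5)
Your proof is correct and follows essentially the same route as the paper's: establish $SN(t)$ via Lemma~\ref{lem:red-sn} and $SN(u)$ via Lemmas~\ref{lem:red-inhab}, \ref{lem:red-sn} and \ref{lem:unsubst-red}, then argue by simultaneous induction on the two reduction trees, using the neutral-term lemma and the same three-way case split on the reducts of $(\lambda x. u)t$. Your explicit re-establishment of the lemma premise for a reduced body $u'$ --- via stability of $\dotsucc$ under substitution together with closure of $\Red{T}$ under reduction --- spells out a step the paper's proof leaves implicit in its appeal to the induction hypothesis, and the stability property you flag does hold (it is routine for the metalanguage rules and is the standard stability of RPO under substitution).
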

\begin{proof}
  Since $(\lambda x. u)t$ is neutral, it is sufficient to show that all its reducts are reducible.
  We have that $t$ is SN by virtue of being in $\Red S$. We have that $u$ is SN by Lemma~\ref{lem:unsubst-red}, and therefore we can apply simultaneous induction on the two rewrite trees. The inductive hypotheses are that $(\lambda x.u')t$ is reducible, for any $u \dotsucc u'$, and that $(\lambda x.u)t'$ is reducible, for any $t \dotsucc t'$.
  Now we take those cases on the reducts of $(\lambda x. u)t$.
  \begin{itemize}
    \case $(\lambda x.u) t \dotsucc (\lambda x.u') t$ for $u \dotsucc u'$; this is reducible by IH.
    \case $(\lambda x.u) t \dotsucc (\lambda x.u) t'$ for $t \dotsucc t'$; this is reducible by IH.
    \case $(\lambda x.u) t \dotsucc u\{t/x\}$; this is reducible by lemma hypothesis.\qedhere
  \end{itemize}
\end{proof}

\begin{lemma}\label{lem:K-dotsucc-conserves-length}
If $K \dotsucc K'$ then $|K| \geq |K'|$.
\end{lemma}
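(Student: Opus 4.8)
The plan is to analyze a single step $K@x \dotsucc K'@x$ according to the position of the contracted redex, after unfolding $K$ into its stack of frames. Write $K = \epsilon \circ F_1 \circ \cdots \circ F_n$ with $n = |K|$ and $F_i = \kwlet x_i \be \hole \kwin u_i$, so that
\[ K@x = \kwlet x_1 \be (\kwlet x_2 \be (\cdots (\kwlet x_n \be x \kwin u_n)\cdots) \kwin u_2) \kwin u_1 . \]
The quantity $|K|$ is then exactly the length of the \emph{spine} of let-bindings running from the root down to the hole-occurrence $x$. Since $x$ is fresh, it occurs only at that innermost subject position and in none of the bodies $u_i$; hence for a reduct to again have the shape $K'@x$ the spine must survive as a tower of lets terminating in $x$. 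The claim will follow once I show that any step leaves the spine length either unchanged or smaller by one.

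First the degenerate case: if $K=\epsilon$ then $K@x=x$, which is a $\dotsucc$-normal form (no RPO clause and no $\rwml$ rule applies to a bare variable), so the statement is vacuous. Otherwise, every position of $K@x$ is either (i) strictly inside one body $u_i$, (ii) the hole occurrence $x$, or (iii) the root of one of the spine lets. In case (i) the step is $u_i \dotsucc u_i'$ and, because $x\notin u_i$, the spine is untouched; this is precisely the single-frame rewrite noted in the text, and $K'$ agrees with $K$ except in that body, so $|K'|=|K|$. Case (ii) offers no redex. In case (iii) I would argue that the only rule that can fire at a spine let is \textbf{let-assoc}: along the spine every subject is either another spine-let or the variable $x$, so the node is neither an application of a $\lambda$, nor a $\kwlet$ with an $\effpure$ subject, nor a $\kwlet$ with an effect-symbol subject, ruling out abs-$\beta$, let-$\beta$ and eff-assoc; and $\succ$ cannot fire at the root since spine nodes are lets rather than symbol applications $\gamma(\vec t)$. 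The let-assoc redex at the seam between $F_i$ and $F_{i+1}$ is $\kwlet x_i \be (\kwlet x_{i+1} \be S \kwin u_{i+1}) \kwin u_i$ with $S = F_{i+2}[\cdots F_n[x]]$, and (when its side condition holds) it rewrites to $\kwlet x_{i+1} \be S \kwin (\kwlet x_i \be u_{i+1} \kwin u_i)$. This is $K'@x$ for the continuation obtained by replacing the two frames $F_i,F_{i+1}$ by the single frame $\kwlet x_{i+1} \be \hole \kwin (\kwlet x_i \be u_{i+1} \kwin u_i)$, so $|K'| = |K|-1$. In every case $|K|\ge|K'|$.

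I expect the main obstacle to be exactly this last piece of bookkeeping: establishing the trichotomy of redex positions and then confirming that let-assoc is the \emph{unique} length-changing step at the spine and that it can only \emph{merge} two adjacent frames into one (never split one into two). Keeping the freshness of $x$ in play throughout is what guarantees that body rewrites cannot disturb the hole-spine and that a reduct of continuation shape must still present $x$ at the bottom of a genuine tower of lets, which is what licenses reading the length off the spine in the first place.
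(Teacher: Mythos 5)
Your proof is correct and takes essentially the same approach as the paper: both boil down to a case analysis on the position of the contracted redex in $K@x$, observing that a rewrite inside a frame body leaves the frame count unchanged while a let-assoc step at a seam merges two adjacent frames into one (never splitting a frame). The only difference is presentational --- the paper packages this as a structural induction on $K$, whereas you analyze the unfolded spine directly and are more explicit about why abs-$\beta$, let-$\beta$, eff-assoc and the RPO cannot fire at a spine node.
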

\begin{proof}
By structural induction on K. If $K = \epsilon$, there is no reduction. If $K = K_0 \circ F$, we have reductions $K_0 \dotsucc K_0'$ and $F \dotsucc F'$, which conserve length (the former by IH).

Via (let-assoc), we also have $K = K_0 \circ F_1 \circ (\kwlet x \be \hole \kwin u) \circ K_1$ and $K' = K_0 \circ (\kwlet x \be \hole \kwin F_1[u]) \circ K_1$. And this is one frame shorter.
\end{proof}

\begin{lemma}\label{lem:pure-red}
If $t \in \Red T$ then $\effpure(t) \in \Red {\kw{E}(T)}$.
\end{lemma}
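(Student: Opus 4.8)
The plan is to unfold the definitions directly; this lemma is the payoff of the biorthogonal ($\toptop$-style) closure, following Lindley and Stark~\cite{lindley2005reducibility}, and so I expect it to fall out almost immediately. The set of continuations quantified over in the definition of $\Red{\kw E(T)}$ is exactly the set picked out by the test predicate $\Red T^\top$, and $\Red T^\top$ is in turn defined in terms of applying a continuation to $\effpure(\cdot)$ of reducible arguments. The two clauses are engineered to fit together precisely so that $\effpure$ of a reducible term lands in the computation-type reducibility set, with no induction required.

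Concretely, I would begin by unfolding the goal: by the definition of $\Red{\kw E(T)}$, showing $\effpure(t) \in \Red{\kw E(T)}$ amounts to establishing $SN(K@\effpure(t))$ for every continuation $K$ satisfying $\Red T^\top(K)$. So I would fix an arbitrary such $K$ and reduce the problem to proving $SN(K@\effpure(t))$ for this one $K$. Note that no typing issue arises, since $t \in \Red T$ guarantees $\effpure(t)$ is a computation of type $\kw E(T)$ by the $\effpure$ typing rule, so $\Red{\kw E(T)}$ is the correct predicate to target.

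I would then unfold the hypothesis carried by $K$. The predicate $\Red T^\top(K)$ asserts precisely that $SN(K@\effpure(t'))$ holds for every $t' \in \Red T$. Since the given term $t$ lies in $\Red T$ by assumption, I may instantiate this universally quantified statement at $t' = t$, which yields $SN(K@\effpure(t))$ outright; as $K$ was arbitrary, the lemma follows. There is no obstacle of substance here: the entire content is absorbed into the earlier definition of the orthogonality closure, and the only point requiring attention is that the quantifier in $\Red T^\top(K)$ ranges over \emph{all} reducible arguments, so that it may be specialized to our particular $t$. Once that is observed, the result is pure definition-chasing.
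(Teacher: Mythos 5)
Your proof is correct and matches the paper's own argument exactly: both unfold $\Red{\kw E(T)}$ to the goal $SN(K@\effpure(t))$ for arbitrary $K \in \Red T^\top$, and then observe that this is precisely an instance of the definition of $\Red T^\top(K)$ specialized to $t \in \Red T$. The paper compresses this to ``immediate from the definition''; your version just spells out the same definition-chasing.
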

\begin{proof}
To show: that $K@\effpure(t) \in \Red {\kw{E}(T)}$ for any $K \in \Red T^\top$.
But this is immediate from the definition of $\Red T^\top$.
\end{proof}

\begin{lemma}\label{lem:let-simple-red}
If $s \in SN$ and $K@(u\{s/x\}) \in SN$, then $K@(\kwlet x \be pure(s) \kwin u) \in SN$.
\end{lemma}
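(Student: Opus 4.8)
The plan is to prove $SN(K@(\kwlet x \be \effpure(s) \kwin u))$ using the standard fact that a term is strongly normalizing exactly when all of its one-step $\dotsucc$-reducts are, and then induct. Writing $L = \kwlet x \be \effpure(s) \kwin u$, and adopting the convention that the bound variable $x$ is chosen fresh for every frame of $K$, I would enumerate the reducts of $K@L$ into five families: (1) the head \emph{let-$\beta$} step $L \rwml u\{s/x\}$, giving $K@(u\{s/x\})$; (2) a step inside $s$; (3) a step inside $u$; (4) a step strictly inside $K$; and (5) a \emph{let-assoc} step between the innermost frame of $K$ and the outer let of $L$. Case (1) is the only base case: it produces $K@(u\{s/x\})$, which is $SN$ by the second hypothesis.

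For the inductive cases I would use the lexicographic measure $\mu = (\nu(s),\, \nu(K@(u\{s/x\})),\, |K|)$, where $\nu(-)$ denotes the height of the $\dotsucc$-reduction tree of a strongly-normalizing term (well-defined by the two hypotheses) and $|K|$ is the frame count. The auxiliary facts I need are that reduction commutes with substitution — if $s \dotsucc s'$ then $u\{s/x\} \dotsucc^{*} u\{s'/x\}$, and if $u \dotsucc u'$ then $u\{s/x\} \dotsucc u'\{s/x\}$ — and that a step inside $K$ lifts from $K@z$ to $K@(u\{s/x\})$ since the hole-filler is untouched. Using these: in case (2) the reduct is $K@(\kwlet x \be \effpure(s') \kwin u)$, whose first component drops because $s \dotsucc s'$, while its second hypothesis $K@(u\{s'/x\})$ is still $SN$ as a reduct of $K@(u\{s/x\})$; in case (3) the reduct is $K@(\kwlet x \be \effpure(s) \kwin u')$ and the second component strictly drops because $K@(u\{s/x\}) \dotsucc K@(u'\{s/x\})$; in case (4) the reduct is $K'@L$ with $K \dotsucc K'$, so the second component again strictly drops (and $|K'| \le |K|$ by Lemma~\ref{lem:K-dotsucc-conserves-length}). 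In each case the lemma's hypotheses hold for the strictly smaller instance, so the induction hypothesis applies.

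The main obstacle is case (5), which is precisely the situation that motivated continuations. Here $K = K_0 \circ F$ with $F = \kwlet y \be \hole \kwin u'$, and let-assoc rewrites $F[L]$ to $\kwlet x \be \effpure(s) \kwin F[u]$, so the reduct is $K_0@(\kwlet x \be \effpure(s) \kwin F[u])$ — again an instance of the lemma, but with $u$ replaced by the \emph{larger} body $F[u]$ and $K$ replaced by the shorter $K_0$. The difficulty is that the body grows rather than shrinks, so neither of the first two measure components obviously decreases; but using the side condition $x \notin FV(u')$ one computes $(F[u])\{s/x\} = F[u\{s/x\}]$, whence $K_0@((F[u])\{s/x\}) = K@(u\{s/x\})$ is \emph{literally} the term in the hypothesis, with an unchanged reduction tree. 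This is exactly why $|K|$ sits as the last component of $\mu$: the first two components are unchanged and only $|K|$ strictly decreases, so $\mu$ still drops lexicographically and the induction hypothesis closes the case. I expect the substitution-commutation facts and this $(F[u])\{s/x\} = F[u\{s/x\}]$ identity to be the only genuinely delicate points; the remainder is bookkeeping over the five reduct families.
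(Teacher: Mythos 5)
Your proof is correct and takes essentially the same route as the paper's: the same five-way case analysis on the one-step reducts of $K@(\kwlet x \be \effpure(s) \kwin u)$, with the crucial let-assoc case closed exactly as in the paper, via the identity $(F[u])\{s/x\} = F[u\{s/x\}]$ (from the side condition $x \notin FV(u')$) so that the hypothesis term $K@(u\{s/x\})$ is literally unchanged while $|K|$ strictly decreases. The only difference is bookkeeping: the paper inducts on $(|K|,\,(s,\,u,\,K))$ ordered by $(>,\,\dotsucc_{lex})$ with the frame count \emph{first}, whereas you place $|K|$ last behind reduction-tree heights; both lexicographic measures decrease in every case, so the arrangement is immaterial.
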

\begin{proof}
By induction on $(|K|,\, (s,\, u,\, K))$ ordered by $(>,\, (\dotsucc_{lex}))$. Proceed by showing all reducts of $K@\kwlet x \be pure(s) \kwin u$ are in $SN$.
\begin{itemize}
\case $K'@\kwlet x \be pure(s) \kwin F[u]$ where $K'\circ F = K$, by let-assoc. To apply the IH, we need to show that $K$ and $F[u]$ meet the lemma premises, that $K'@(F[u]\{s/x\}) \in SN$. Note $K'@F[u] = K@u$ and $x$ cannot be free in $F$, by the let-assoc side condition, therefore $K'@(F[u]\{s/x\}) = K'@(F[u\{s/x\}]])$.) Furthermore $|K'| < |K|$, so the metric decreases.
\case The reduct is $K@u\{s/x\}$. By hypothesis.
\case The reduct is $K@\kwlet x \be pure(s') \kwin u$ where $s \dotsucc s'$. By IH.
\case The reduct is $K@\kwlet x \be pure(s) \kwin u'$ where $u \dotsucc u'$. By IH.
\case The reduct is $K'@\kwlet x \be pure(s) \kwin u$ where $K \dotsucc K'$. By IH. \qedhere
\end{itemize}
\end{proof}

\begin{lemma}\label{lem:let-red}
If $s \in \Red {\kw{E}(S)}$ and $u$ is such that for all $s' \in \Red {S}$ we have $u\{s'/x\} \in \Red {\kw{E}(T)}$, then $\kwlet x \be s \kwin u \in \Red {\kw{E}(T)}$.
\end{lemma}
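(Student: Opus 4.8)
The plan is to peel back the definition of $\Red{\kw{E}(T)}$ and reduce the whole statement to a single application of Lemma~\ref{lem:let-simple-red}, which already absorbs the genuinely delicate reasoning about the let-assoc rule. To show $\kwlet x \be s \kwin u \in \Red{\kw{E}(T)}$, I must show $SN(K@(\kwlet x \be s \kwin u))$ for an arbitrary continuation $K \in \Red T^\top$, so I begin by fixing such a $K$.

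The key observation is that $\kwlet x \be \hole \kwin u$ is itself a frame $F$, so that $K@(\kwlet x \be s \kwin u) = (K \circ F)@s$ by the continuation algebra. Hence it suffices to prove $SN((K \circ F)@s)$. Since $s \in \Red{\kw{E}(S)}$, the definition of that predicate tells us it is enough to establish $K \circ F \in \Red S^\top$; the desired strong normalization then follows at once by instantiating the $\Red{\kw{E}(S)}$-membership of $s$ with the continuation $K \circ F$.

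To prove $K \circ F \in \Red S^\top$, I must show $SN((K \circ F)@\effpure(s'))$ for every $s' \in \Red S$. Unfolding, $(K \circ F)@\effpure(s') = K@(\kwlet x \be \effpure(s') \kwin u)$, which is exactly the shape handled by Lemma~\ref{lem:let-simple-red}. I verify its two premises: first, $s'$ is $SN$ by Lemma~\ref{lem:red-sn}, since $s' \in \Red S$; second, the main hypothesis gives $u\{s'/x\} \in \Red{\kw{E}(T)}$, and because $K \in \Red T^\top$, the definition of $\Red{\kw{E}(T)}$ yields $SN(K@(u\{s'/x\}))$. Lemma~\ref{lem:let-simple-red} then delivers $SN(K@(\kwlet x \be \effpure(s') \kwin u))$, which is precisely $SN((K \circ F)@\effpure(s'))$, as required. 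Assembling these pieces closes the argument.

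I expect no serious obstacle here, precisely because the hard part---taming the progress-but-not-size-decreasing let-assoc rewrites via the continuation-length metric $|K|$---has been quarantined inside Lemma~\ref{lem:let-simple-red}. The only point demanding care is the continuation/frame bookkeeping: recognizing $\kwlet x \be \hole \kwin u$ as a frame and pushing it onto $K$, which is the move that converts a goal about $\Red{\kw{E}(T)}$ (the result type) into a goal about the $\toptop$-closure $\Red S^\top$ of the \emph{subject's} type, letting the reducibility of $s$ discharge it. This ``push the let into the continuation'' step is the whole content of the proof and mirrors the Lindley--Stark treatment.
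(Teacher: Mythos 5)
Your proof is correct and follows essentially the same route as the paper's: push the frame $\kwlet x \be \hole \kwin u$ onto $K$, show the extended continuation lies in $\Red S^\top$ by invoking Lemma~\ref{lem:let-simple-red} (whose premises you discharge exactly as the paper does, via Lemma~\ref{lem:red-sn} and the definition of $\Red T^\top$), and then conclude from $s \in \Red{\kw{E}(S)}$. Your write-up is in fact slightly more careful than the paper's, which contains a small typo ($K'$ where $K$ is meant in the premise of Lemma~\ref{lem:let-simple-red}); your bookkeeping gets this right.
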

\begin{proof}
We show that $SN(K@\kwlet x \be s \kwin u)$ for any $K \in \Red T^\top$. First, we show $K' = K\circ(\kwlet x \be \hole \kwin u) \in \Red S^\top$, which in other words says that $SN(K@\kwlet x \be pure(s'') \kwin u)$ for any $s'' \in \Red S$. This we get from Lemma~\ref{lem:let-simple-red} (by hypothesis, $u\{s''/x\}\in \Red T$ and further $K'@(u\{s''/x\}) \in SN$ as required by Lemma~\ref{lem:let-simple-red}). Now it follows by the definition of $s \in \Red S$ that $K'@s = K@\kwlet x \be s \kwin u \in SN$.
\end{proof}
%
%

The next lemma shows a property of the undotted $\succ$, that is, the raw RPO relation, which will be used as a subroutine in some inductive proofs to follow.
Since this is an extraction of an inductive step, it is stated in terms of a ``lemma hypothesis'' which will align with some outer induction hypothesis in the cases where it is used.

To make the lemma appropriately general, we define \emph{contexts} to encompass the various kinds of settings in which terms can be placed to prove reducibility:
\begin{eqnarray*}
C &::=& K \bigm\vert \hole t \bigm\vert \hole
\end{eqnarray*}
And write $C[s]$ to denote filling the context with a term:
\begin{eqnarray*}
C[s] = 
\begin{cases*}
K@s & when $C = K$ \\
st & when $C = \hole t$ \\
s & when $C = \hole$
\end{cases*}
\end{eqnarray*}

\begin{lemma}[RPO step]\label{lem:rpo-step}
Given some context $C$, and $\gamma(s_1,\, \dots,\, s_n) = s \succ t$, with each $C[s_i] \in SN$, and a ``lemma hypothesis'' that
\begin{itemize}
\item Given any $(\gamma'$,\, $\vec t)$ having $(\gamma,\, \vec s)$ greater than $(\gamma',\, \vec t)$ under the lexicographic ordering $((>_\Sigma),\, (\dotsucc_{lex}))$, we have $C[\gamma'(t_1,\, \dots,\, t_n)] \in SN$,
\end{itemize}
then $C[t] \in SN$.
\end{lemma}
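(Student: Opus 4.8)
The plan is to induct on the derivation of $s \succ t$, casing on the last rule applied (equivalently, on which of the three clauses of the RPO definition produced it). The two symbol clauses (cases 1 and 2, i.e.\ the rule rpo-symbol) should fall out immediately from the lemma hypothesis. There $t = \gamma'(t_1,\, \dots,\, t_m)$ with $(\gamma, \vec s) >_{RPO} (\gamma', \vec t)$, and I would first note that $>_{RPO}$ (lexicographic in $>_\Sigma$ then $\succ_{lex}$) is contained in the order $((>_\Sigma),\, \dotsucc_{lex})$ named in the hypothesis: since $(\succ) \subseteq (\dotsucc)$ we have $\succ_{lex} \subseteq \dotsucc_{lex}$, and any decision made by $>_\Sigma$ on the first component is common to both orders. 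Hence $(\gamma, \vec s)$ dominates $(\gamma', \vec t)$ in the hypothesis order, and the hypothesis delivers $C[t] = C[\gamma'(\vec t)] \in SN$ outright; the side condition $s \succ t_j$ plays no role here, having already been absorbed into the outer induction that supplies the hypothesis.

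This leaves the subterm clause (case 3, rpo-subterm), where $s_i \succeq t$ for some argument $s_i$. If $s_i = t$ then $C[t] = C[s_i] \in SN$ is exactly one of the premises and we are finished. The substantive case is $s_i \succ t$; here $s_i$ must be symbol-headed, say $s_i = \gamma''(s_i^1,\, \dots,\, s_i^k)$, because no RPO clause fires on a variable. I would discharge it by applying the induction hypothesis to the strictly smaller derivation $s_i \succ t$ in the same context $C$. Doing so requires re-establishing the lemma's two premises at $s_i$: that each $C[s_i^j] \in SN$, and that the lemma hypothesis holds for $(\gamma'', \vec{s_i})$ in place of $(\gamma, \vec s)$.

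Re-establishing $C[s_i^j] \in SN$ is where continuations and effect symbols interact, and I expect it to be the main obstacle. For $C = \hole$ each $s_i^j$ is a subterm of the strongly normalizing $s_i$; for $C = \hole\,t$ a one-step reduction simulation suffices, since the symbolic $s_i^j$ heads no $\beta$-redex. The delicate case is $C = K$: then $K@s_i^j$ can fire an eff-assoc redex against the innermost let-frame that has no one-step counterpart in $K@s_i$, so ``subterm of an $SN$ term'' no longer applies. I would extract this as a sublemma --- if $e$ is an effect symbol and $K@e(r_1,\, \dots,\, r_k) \in SN$, then $K@r_j \in SN$ --- and prove it by induction on $|K|$. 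For $|K| = 0$ the $r_j$ is a subterm; for $K = K_0 \circ (\kwlet x \be \hole \kwin w)$ a single eff-assoc step carries $K@e(r_1,\,\dots,\,r_k)$ to the still-$SN$ term $K_0@e(\kwlet x \be r_1 \kwin w,\, \dots,\, \kwlet x \be r_k \kwin w)$, whose $j$-th argument displays $K@r_j$ as $K_0@(\kwlet x \be r_j \kwin w)$, so the shorter instance at $K_0$ closes the case. (Function-symbol heads are easier, as no boundary eff-assoc can occur.)

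The second premise --- the lemma hypothesis at $(\gamma'', \vec{s_i})$ --- is the subtle point, and it cannot simply be inherited: $(\gamma, \vec s)$ need not dominate $(\gamma'', \vec{s_i})$ in the $((>_\Sigma),\, \dotsucc_{lex})$ order, since the head $\gamma''$ of a subterm may even exceed $\gamma$ under $>_\Sigma$. This is precisely why the statement is framed as ``an extraction of an inductive step'': the recursion descends to a proper subterm $s_i$ of $s$, and the hypothesis required at $(\gamma'', \vec{s_i})$ is the one furnished by the surrounding induction. I would therefore carry out the whole argument inside an outer induction whose left-hand term decreases under the subterm relation (together with $\dotsucc$), so that the recursive appeal at $s_i$ is licensed by subterm-descent rather than by a $\dotsucc_{lex}$ comparison; the needed hypothesis at $(\gamma'', \vec{s_i})$ is then exactly the instance of that outer hypothesis that becomes available once we have descended to $s_i$.
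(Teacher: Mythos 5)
Your handling of the subterm case (RPO clause 3) contains the genuine gap, and it is an unnecessary one. The paper's argument there is a one-liner: the paper explicitly assumes (in the ``Rewriting contexts'' paragraph) that all the relations, including $\succ$, are \emph{compatibly closed}, so $s_i \succ t$ gives $C[s_i] \dotsucc C[t]$ directly, and since $C[s_i] \in SN$ is a premise of the lemma, $C[t]$ is a reduct of an $SN$ term and hence $SN$ (the case $s_i = t$ being immediate). That premise exists precisely to make this case trivial. You instead try to re-invoke the lemma on the smaller derivation $s_i \succ t$, which forces you to re-establish both premises at $s_i = \gamma''(\vec{s_i})$; as you yourself concede, the ``lemma hypothesis'' at $(\gamma'', \vec{s_i})$ is simply not derivable from the one given at $(\gamma, \vec s)$ (the subterm's head $\gamma''$ may be $>_\Sigma$-larger than $\gamma$). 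Your proposed repair --- rebuilding the surrounding induction of Lemma~\ref{lem:symbol-sn} over a subterm-plus-$\dotsucc$ order so that the hypothesis ``becomes available once we have descended to $s_i$'' --- is not a proof of the lemma as stated: the hypothesis is a fixed premise, not something you may strengthen by restructuring the caller, and the restructuring itself is only sketched. The auxiliary sublemma you introduce (if $K@e(r_1,\dots,r_k) \in SN$ then $K@r_j \in SN$, via eff-assoc) is plausible but entirely superfluous once compatible closure is used.

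A secondary, smaller discrepancy: in the symbol cases (clauses 1 and 2) you declare the conclusion ``immediate'' from the lemma hypothesis, dismissing the side conditions $s \succ t_j$ as irrelevant. The paper instead first proves each $C[t_j] \in SN$ (by its induction on the size of $t$, using $s \succ t_j$) and only then invokes the hypothesis. The reason is that in the lemma's actual use inside Lemma~\ref{lem:symbol-sn}, the ``lemma hypothesis'' is the induction hypothesis of that proof, which is conditional on the arguments being $SN$ in context; read unconditionally, as you do, the hypothesis would be too strong for Lemma~\ref{lem:symbol-sn} to supply, and the lemma would not compose with its caller. So your shortcut in cases 1--2 trades away applicability, and your detour in case 3 replaces a one-step argument with machinery that does not close.
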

\begin{proof}
    We show that $C[t]$ is SN by induction on the size of $t$, and take cases on the RPO rule that proves $s \succ t$:
    \begin{itemize}
      \case (1) $t = \gamma(t_1,\, \dots,\, t_n)$.
      First we show $C[t_i]$ is SN, which is given by the induction hypothesis (noting $t_i$ is smaller than $t$).
      Then by the lemma hypothesis, $C[t] = C[\gamma'(t_1,\, \dots,\, t_n)]$ is SN. Satisfying the ordering required by the lemma hypothesis, $\gamma$ has not changed and the RPO rule has offered $\vec s \succ_{lex} \vec t$, in turn implying $\vec s \dotsucc_{lex} \vec t$. Also note that the lemma hypothesis itself is preserved, when inducting.
      \case (2) $t = \gamma'(t_1,\, \dots,\, t_m)$ and $\gamma >_\Sigma \gamma'$.
      First we show $C[t_i]$ is SN, which is given by the induction hypothesis (noting $t_i$ is smaller than $t$).
      Then by the lemma hypothesis, $C[t] = C[\gamma'(t_1,\,\dots,\, t_m)]$ is SN. The lemma hypothesis is satisfied by $\gamma >_\Sigma \gamma'$.
      \case (3) $s_i \succeq t$. Here $C[t]$ is in the reduction tree of $C[s_i]$, which was assumed SN, so then $C[t] \in SN$.
       \qedhere
    \end{itemize}
\end{proof}

\begin{lemma}\label{lem:symbol-sn}
  Let $s = \gamma(s_1,\, \dots,\, s_n)$.
  Given a context $C$, if each $s_i$ has $C[s_i] \in SN$, then $C[s] \in SN$.
\end{lemma}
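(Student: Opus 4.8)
The plan is to prove $C[s]\in SN$ by a single well-founded induction that inspects every one-step $\dotsucc$-reduct of $C[s]$ and shows each is SN, delegating the head rewrites to Lemma~\ref{lem:rpo-step}. I would induct on the lexicographic triple $(m,\ (\gamma,\vec s),\ M)$, where $m$ is the number of $\kwlet$-frames in $C$ (so $m=0$ for $C=\hole$ and $C=\hole\,t$, and $m=|K|$ for $C=K$); $(\gamma,\vec s)$ is ordered by the RPO symbol ordering $((>_\Sigma),(\dotsucc_{lex}))$; and $M$ is the reduction tree of the context $C$ itself. This is well-founded: each $C[s_i]$ is SN by hypothesis, and since reductions of a subterm $s_i$ or of the context lift through $C$, each $s_i$ is SN and $C$ is SN, so both $\dotsucc_{lex}$ on the argument tuples and the context tree $M$ are well-founded. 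Note the induction ranges over all configurations $(C,\gamma,\vec s)$ meeting the hypothesis, so the IH may be applied at a \emph{different} context, provided the measure strictly drops.

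I would then enumerate the reducts. \emph{(i) A reduction inside an argument}, $s_i\dotsucc s_i'$, gives $C[\gamma(s_1,\dots,s_i',\dots,s_n)]$; here $\vec s\dotsucc_{lex}\vec s'$ so the RPO component drops, and $C[s_i']$ is SN since $C[s_i]\dotsucc C[s_i']$, so the IH applies. \emph{(ii) A reduction localized to the context}, $C\dotsucc C'$ (reducing $t$ when $C=\hole\,t$, or $K\dotsucc K'$), leaves $(\gamma,\vec s)$ fixed; by Lemma~\ref{lem:K-dotsucc-conserves-length} we have $m'\le m$, and if $m'=m$ then $M$ strictly shrinks, while if $m'<m$ (a let-assoc reorganizing $K$) the first component drops, so the IH applies with $C'[s_i]$ SN as a reduct of $C[s_i]$. \emph{(iii) A head RPO-step} $s\succ t$ gives $C[t]$; here I invoke Lemma~\ref{lem:rpo-step} at this same context $C$, whose ``lemma hypothesis'' --- that $C[\gamma'(\vec t)]$ is SN for every RPO-smaller $(\gamma',\vec t)$ --- is exactly the present IH at a strictly smaller second component, the side conditions $C[t_i]\in SN$ being discharged inside the proof of Lemma~\ref{lem:rpo-step}.

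The genuinely non-local case, and the main obstacle, is \emph{(iv)}: when $\gamma=e$ is an effect symbol and $C=K_0\circ(\kwlet x\be\hole\kwin u)$, the eff-assoc rule fires and rewrites $C[e(\vec s)]$ to $K_0@\,e(s_1',\dots,s_n')$ with $s_i'=\kwlet x\be s_i\kwin u$. This reduct shrinks nothing in the RPO sense --- the new arguments $s_i'$ may be RPO-\emph{larger} than $s_i$ --- so it must be charged to the first component. The crucial identity is $K_0@s_i'=(K_0\circ(\kwlet x\be\hole\kwin u))@s_i=C[s_i]$, which is SN by hypothesis; hence the reduct is precisely $C''[e(s_1',\dots,s_n')]$ for the shorter context $C''=K_0$ with $m''=m-1$ and each $C''[s_i']\in SN$, and the IH at a strictly smaller first component delivers SN. Finally, eff-assoc is the only way the surrounding continuation can interact with a $\gamma$-headed redex --- let-$\beta$ needs an $\effpure$ head and let-assoc needs a $\kwlet$ head, neither of which $\gamma(\vec s)$ supplies, and no rule pushes a frame into an application --- so (i)--(iv) are exhaustive and $C[s]\in SN$ follows. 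The subtle points to get right will be the well-foundedness of the combined measure and the exact matching of this lemma's IH against the ``lemma hypothesis'' demanded by Lemma~\ref{lem:rpo-step}.
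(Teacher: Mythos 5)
Your proof is correct and follows essentially the same route as the paper's: a lexicographic induction combining the number of continuation frames in the context, the pair $(\gamma,\vec s)$ under $((>_\Sigma),(\dotsucc_{lex}))$, and the context's own reduction tree, with head RPO steps delegated to Lemma~\ref{lem:rpo-step} (matching its ``lemma hypothesis'' against the induction hypothesis) and the eff-assoc reduct handled exactly as in the paper, via the identity $K_0@F[s_i]=C[s_i]$ together with the strict drop in frame count. The only difference is organizational: the paper splits into three cases by the type of $s$, each with a tailored measure placing the context component before $(\gamma,\vec s)$, whereas you run a single unified induction with the context tree last; both orderings work, since no reduct ever decreases one of these components at the cost of increasing another without the frame count dropping first.
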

\begin{proof}
By cases on the type of $s$. In each case, we study the possible terms $C[s]$ and show that their reducts are all SN, and thus that $C[s]$ is.
\begin{itemize}
\case $B$. It must be that $C = \hole$ so we just study the term $s$ itself.
  By induction on the tuple \( (\gamma,\, \vec s) \) lexicographically ordered by
  \[((>_\Sigma),\, (\dotsucc_{lex})).\] By cases on the reducts of $s$:
  \begin{itemize}
    \case $\gamma(s_1,\, \dots,\, s_i',\, \dots,\, s_n)$ for some index $i$ and $s_i \dotsucc s_i'$. The inner inductive hypothesis applies because $\gamma$ is unchanged, while $\vec s$ has decreased under $\dotsucc_{lex}$.
    \case $t$ where $s \succ t$. (Under the un-dotted $\succ$ relation.)
    Lemma~\ref{lem:rpo-step} applies. Our inner induction hypothesis implies the ``lemma hypothesis'' of Lemma~\ref{lem:rpo-step}.
  \end{itemize}
  
\case $T_1 \to T_2$.
  The context $C$ is either $\hole$ or $\hole s'$. Showing that $s s'$ is SN covers all the cases for $s$ alone, so we only show those.
  Proceed by lexicographical induction on \((s',\, \gamma,\, \vec s)\),
  ordered by \(( \dotsucc,\, >_\Sigma,\, \dotsucc_{lex})\).
  \begin{itemize}
    \case a subterm reduces: either some $s_i \dotsucc s'_i$ or the applicand $s' \dotsucc s''$. By IH, with reduction sequence decreasing.
    \case $s \succ t$. 
    Lemma~\ref{lem:rpo-step} applies. Our induction hypothesis implies the ``lemma hypothesis''.
  \end{itemize}

\case $\kw{E}(T')$.
  Here $C = K$. Since $K@s_i \in SN$, we know that $K$ itself is SN.
  We proceed by an lexicographic induction on the tuple \[
      (|K|,\, K, \gamma,\, \vec s)
      \text{ ordered by }
      (>,\, \dotsucc,\, >_\Sigma,\, \dotsucc_{lex}).
  \]
  We show that every reduct of $K@s = K@\gamma(s_1,\, \dots,\, s_n)$ is strongly-normalizing, and thus that the term itself is. By cases on those reducts:
  \begin{itemize}
    \case $K@\gamma(s_1,\, \dots,\, s_i',\, \dots,\, s_n)$ for some index $i$ and $s_i \dotsucc s_i'$. The IH applies: $K$ and $\gamma$ are unchanged and the arguments $\vec s$ have lexicographically reduced under $\dotsucc$.
    \case $K'@s$ where $K \dotsucc K'$. The IH applies because the continuation has gotten no longer (Lemma~\ref{lem:K-dotsucc-conserves-length}) and $K'$ is in the reduction tree of $K$.
    \case $K'@\gamma(F[s_1],\, \dots,\, F[s_n])$ where $K = K' \circ F$ and $\gamma$ is an effect symbol (eff-assoc).
    The IH applies because $K$ has gotten shorter (as $K'$). We also need that $K'@F[s_i]$ is SN, to satisfy the IH, but $K'@F[s_i] = K@s_i$, which we already know is SN.
    \case $K@t$ where $s \succ t$. (Under the un-dotted $\succ$ relation.)
    Lem\-ma~\ref{lem:rpo-step} applies.
    \qedhere
  \end{itemize}
\end{itemize}
\end{proof}

\begin{lemma}\label{lem:symbol-red}
  For a rewritable symbol $\gamma : S_1 \times \cdots \times S_n \to T$,
  if each $s_i \in \Red {S_i}$, then $s = \gamma(s_1,\, \dots,\, s_n) \in \Red {T}$.
\end{lemma}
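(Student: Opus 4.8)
The plan is to induct on the result type $T$, turning each clause of the reducibility predicate into a strong-normalization obligation that is discharged by Lemma~\ref{lem:symbol-sn}, while leaning on the structural facts already established: reducible terms are SN (Lemma~\ref{lem:red-sn}), $\dotsucc$ preserves reducibility, a neutral term is reducible once all of its reducts are, and application preserves reducibility (Lemma~\ref{lem:app-red}). The three context shapes $\hole$, $\hole\,t$ and $K$ introduced for Lemma~\ref{lem:symbol-sn} match the three type constructors, so each case of the induction pairs with exactly one context.

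The base case $T=B$ is immediate: each $s_i$ is SN by Lemma~\ref{lem:red-sn}, so Lemma~\ref{lem:symbol-sn} with the empty context $\hole$ gives $SN(\gamma(\vec s))$, i.e.\ $\gamma(\vec s)\in\Red B$. For the effect case $T=\kw E(T')$ with $\gamma$ an effect symbol, every argument has type $\kw E(T')$, hence $s_i\in\Red{\kw E(T')}$ yields $SN(K@s_i)$ directly from the definition of reducibility at an effect type; Lemma~\ref{lem:symbol-sn} with $C=K$ then gives $SN(K@\gamma(\vec s))$ for every $K\in\Red{T'}^\top$, which is exactly $\gamma(\vec s)\in\Red{\kw E(T')}$.

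The arrow case $T=S\to T'$ is the crux. Reducibility here says $\gamma(\vec s)\,t\in\Red{T'}$ for all $t\in\Red S$, and since $\gamma(\vec s)\,t$ is an application it is neutral, so I would invoke the lemma on neutral terms and check that every reduct lands in $\Red{T'}$. A reduct that rewrites the argument, $\gamma(\vec s)\,t'$ with $t\dotsucc t'$, is handled by an inner induction on the SN tree of $t$ (note $t'\in\Red S$ by preservation); a reduct that exposes a subterm, $s_i\,t$ with $\gamma(\vec s)\succ s_i$ by RPO case 3, is reducible because $s_i$ is already reducible and Lemma~\ref{lem:app-red} applies. The problematic reducts are those that rewrite the head by an RPO step into a fresh symbol application $\gamma'(\vec u)\,t$.

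The difficulty is that this $\gamma'(\vec u)$ arises from RPO cases 1 and 2, where the arguments $\vec u$ satisfy only $\gamma(\vec s)\succ u_j$ and are \emph{not} known to be reducible, so the induction hypothesis of the very lemma being proved cannot be reapplied to $\gamma'$. This is precisely the mismatch that Lemma~\ref{lem:rpo-step} was factored out to absorb: staying at the level of SN in a context, it discharges one RPO step provided an outer ``lemma hypothesis'' already covers the RPO-smaller tuples, and that hypothesis is supplied by an induction on $(\gamma,\vec s)$ under the order $((>_\Sigma),\,(\dotsucc_{lex}))$. I therefore expect to organize the arrow case around that RPO-ordered induction \emph{interleaved} with the neutral-term lemma, so that a head step to $\gamma'(\vec u)$ is covered by the induction hypothesis rather than by a circular appeal to reducibility; threading Lemma~\ref{lem:rpo-step} through correctly, so that these RPO-generated symbol applications never require their arguments to be reducible, is the delicate heart of the argument. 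I expect the same device to be needed for the remaining awkward subcase of the effect clause---a \emph{function} symbol whose result type is $\kw E(T')$ but whose argument types are not effect types---where $K@s_i$ carries no reducibility information and the direct appeal to Lemma~\ref{lem:symbol-sn} with $C=K$ is unavailable.
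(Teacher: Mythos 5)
Your skeleton agrees with the paper's proof: a case split on $T$, with the $B$ and $\kw{E}(T')$ cases discharged by Lemma~\ref{lem:symbol-sn} exactly as you describe, and the arrow case argued through neutrality of $\gamma(\vec s)\,t$ together with an induction ordered lexicographically by $((>_\Sigma,\,\dotsucc_{lex}),\,\dotsucc)$. You have also correctly isolated the one genuinely delicate point, which the paper dispatches in a single phrase (``then IH proves the point''): the head reducts $\gamma'(\vec u)\,t$ coming from RPO cases (1) and (2), whose arguments $\vec u$ are not known to be reducible.

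However, your repair for that point has a level mismatch and would fail as stated. Lemma~\ref{lem:rpo-step} concludes only $C[t] \in SN$, and the contexts $C ::= K \mid \hole\,t \mid \hole$ contain at most one application frame; so the most it can ever give you is $SN(\gamma'(\vec u)\,t)$. What the arrow case needs is $\gamma'(\vec u)\,t \in \Red{T_2}$, which is strictly stronger whenever $T_2$ is itself an arrow or effect type: it demands, e.g., $SN(\gamma'(\vec u)\,t\,t')$ for every reducible $t'$, or $SN(K@(\gamma'(\vec u)\,t))$ for every $K \in \Red{T'}^\top$, and none of these tests is expressible as a single context of the paper's grammar. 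So ``staying at the level of SN in a context'' is precisely what you cannot do here. What is needed is the \emph{reducibility-level} counterpart of Lemma~\ref{lem:rpo-step}: the same inner induction on the size of the reduct that its proof uses, but run against the predicates $\Red{T}$. Concretely, given $\gamma(\vec s) \succ t$, induct on the size of $t$: in RPO case (3), $s_i \succeq t$, so $t$ is reducible because $s_i$ is and reducibility is preserved under $\dotsucc$; in cases (1)--(2), $t = \gamma'(\vec u)$ with $\gamma(\vec s) \succ u_j$ and each $u_j$ smaller than $t$, so each $u_j$ is reducible by the size induction, and \emph{only then} does the outer induction hypothesis on $(\gamma,\vec s)$ --- which does require reducible arguments --- apply to $(\gamma',\vec u)$; Lemma~\ref{lem:app-red} then closes the case. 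This two-level argument is what the paper's terse appeal to its IH is compressing (the paper never invokes Lemma~\ref{lem:rpo-step} in this proof at all), so your instinct that an rpo-step-shaped device is the heart of the matter is right, but it must be restated for $\Red{T}$ rather than imported from the SN world. Your closing observation about function symbols of result type $\kw{E}(T')$ with non-effect argument types is a fair catch --- the paper's one-line effect case silently uses that every argument is reducible at type $\kw{E}(T')$, which the typing rules guarantee only for effect symbols --- but there too the remedy is a weakened premise for Lemma~\ref{lem:symbol-sn} (for a function symbol, eff-assoc cannot fire, so strong normalization of $K$ and of each $s_i$ suffices), not Lemma~\ref{lem:rpo-step}.
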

\begin{proof}
By cases on the type of $s$.
\begin{itemize}
\case $B$. To show $s \in \Red B$ for which we only need that $s \in SN$, and we get this from Lemma~\ref{lem:symbol-sn}.
\case $T_1 \to T_2$. To show $s s' \in \Red {T_2}$ for any $s' \in \Red {T_1}.$
  Because $s s'$ is neutral, we can show just that all the reducts of $s s'$ are reducible.
  By induction on $((\gamma, \vec s), s')$ ordered by $((>_\Sigma, \dotsucc_{lex}), \dotsucc_{lex})_{lex}$.
  The only reductions of $s s'$ are in $s$ or in $s'$. If in $s'$, the IH suffices.
  If in $s$, there are two possibilities:
  \begin{itemize}
  \case the reduction is of the form $s = \gamma(s_1,\, \dots s_i,\, \dots,\, s_n) \dotsucc \gamma(s_1,\, \dots s_i',\, \dots,\, s_n)$ with $s_i \dotsucc s_i'$, in which case the IH suffices.
  \case $s \succ t$.
  If it reduces by $s \succ t = \gamma'(t_1,\,\dots,\,t_m)$ with $(\gamma, \vec s) >_{RPO} (\gamma', \vec t)$, then IH proves the point. On the other hand, if it is $s_i \succeq t$ then $t \in \Red {S_i}$ by virtue of the $s_i \in \Red {S_i}$ assumption.
  \end{itemize}

\case $\kw{E}(T')$. To show $K@s \in SN$ for any $K \in \Red {T'}^\top$. Because $s_i \in \Red {S_i}$, we have $K@s_i \in SN$, which satisfies the premises of Lemma~\ref{lem:symbol-sn}, thus $K@s\in SN$ as needed.
\qedhere
\end{itemize}

\end{proof}

Write $s\{\vec s/\vec x\}$ for the operation of simultaneously substituting each $s_i$ for the free variable $x_i$ within $s$: $s\{\vec s/\vec x\} = s\{s_1/x_1,\,\dots,\,s_n/x_n\}$.

\begin{lemma}[Reducibility]
Given $x_1 : S_1,\,\dots,\,x_n:S_n \vdash t : T$, for all $\vec s \in \Red {\vec S}$, we have $t\{\vec s/\vec x\} \in \Red T$.
\end{lemma}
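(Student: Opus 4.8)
The plan is to proceed by structural induction on the typing derivation of $t$ (equivalently, on the structure of $t$), with one case per term-former of the grammar, discharging each case by the corresponding ``construction'' lemma already established. Concretely, the variable case is immediate, applications are handled by Lemma~\ref{lem:app-red}, abstractions by Lemma~\ref{lem:lam-red}, $\effpure$ by Lemma~\ref{lem:pure-red}, the let-construct by Lemma~\ref{lem:let-red}, and symbol applications $\gamma(s_1,\,\dots,\,s_n)$ by Lemma~\ref{lem:symbol-red}. In each non-leaf case I would first invoke the induction hypothesis on the immediate subterms---under the same substitution $\{\vec s/\vec x\}$, or an extended one for binders---to learn that the substituted subterms are reducible, and then feed those facts to the matching lemma.

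Spelling out the leaf and easy cases: if $t = x_i$ then $t\{\vec s/\vec x\} = s_i \in \Red {S_i}$ by assumption. If $t = u_1 u_2$ with $u_1 : S \to T$ and $u_2 : S$, the IH gives $u_1\{\vec s/\vec x\} \in \Red {S\to T}$ and $u_2\{\vec s/\vec x\} \in \Red S$, so Lemma~\ref{lem:app-red} yields $(u_1 u_2)\{\vec s/\vec x\} \in \Red T$. If $t = \effpure(u)$ with $u : T'$, the IH gives $u\{\vec s/\vec x\} \in \Red {T'}$ and Lemma~\ref{lem:pure-red} concludes. For $t = \gamma(u_1,\,\dots,\,u_n)$, the IH on each argument supplies $u_i\{\vec s/\vec x\} \in \Red {S_i}$ and Lemma~\ref{lem:symbol-red} gives reducibility of the whole; this covers effect symbols (all $S_i = \kw{E}(T)$) and function symbols uniformly, which is why it was convenient to state Lemma~\ref{lem:symbol-red} for a generic rewritable $\gamma$.

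The two binder cases require a little more care. For $t = \lambda y. u$ of type $S \to T'$, Lemma~\ref{lem:lam-red} demands that $(u\{\vec s/\vec x\})\{t'/y\} \in \Red {T'}$ for every $t' \in \Red S$. By the variable convention I may assume $y \notin \{\vec x\}$ and $y \notin \FV(s_i)$ for all $i$, so that $(u\{\vec s/\vec x\})\{t'/y\} = u\{\vec s/\vec x,\, t'/y\}$; the right-hand side is reducible by the IH applied to $u$ in the extended context $\vec x : \vec S,\, y : S$ with the extended substitution. The let-case $t = \kwlet y \be u_1 \kwin u_2$ is analogous: the IH gives $u_1\{\vec s/\vec x\} \in \Red {\kw{E}(S)}$, and the same substitution-commutation identity together with the IH on $u_2$ establishes the side condition of Lemma~\ref{lem:let-red}, namely that $(u_2\{\vec s/\vec x\})\{s'/y\} \in \Red {\kw{E}(T)}$ for all $s' \in \Red S$; Lemma~\ref{lem:let-red} then delivers the conclusion. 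I expect the main---indeed essentially the only---obstacle to be this substitution bookkeeping in the binder cases, correctly commuting the simultaneous substitution past the bound variable and matching the extended context against the IH, since every genuinely analytic fact about reducts has already been packaged into the construction lemmas, leaving the fundamental lemma itself a routine assembly.
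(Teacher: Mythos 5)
Your proof is correct and follows essentially the same route as the paper's: structural induction on $t$, with each term-former discharged by its corresponding construction lemma (Lemmas~\ref{lem:app-red}, \ref{lem:lam-red}, \ref{lem:pure-red}, \ref{lem:let-red}, \ref{lem:symbol-red}), and the binder cases handled by extending the substitution and commuting it past the bound variable. If anything, your treatment of the $\kw{let}$ case is cleaner than the paper's, which states the side condition only for the specific substituted subject and cites Lemma~\ref{lem:symbol-red} where Lemma~\ref{lem:let-red} is clearly intended; your version, quantifying over all $s' \in \Red S$ before invoking Lemma~\ref{lem:let-red}, is exactly what that lemma requires.
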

\begin{proof}
By structural induction on $t$.
\begin{itemize}
\case $x_i$. Then $S_i = T$. Now $s_i \in \Red {S_i}$ and $x_i\{s_i/x_i\} = s_i \in \Red {S_i} = \Red T$.
\case $s't'$. Immediate from Lemma~\ref{lem:app-red} and the IH.
\case $\lambda x. u$. The type derivation has $\Gamma,\, x:S \vdash u : T'$. Let $s'$ be in $\Red S$. By inductive hypothesis, using $\Gamma$ extended with $s'$, we have $u\{\vec s/\vec x, s'/x\} = u\{\vec s/\vec x\}\{s'/x\} \in \Red T$, and thence by Lemma~\ref{lem:lam-red}, $\lambda x.u \in \Red {S\to T'}$.
\case $\effpure(M)$. Immediate from Lemma~\ref{lem:pure-red} and the IH.
\case $\gamma(t_1, t_2, ..., t_n)$. Immediate from Lemma~\ref{lem:symbol-red} and the IH.
\case $\kwlet x \be t' \kwin u$. 
The type derivation is such that $\Gamma \vdash t' : S$ and $\Gamma, x:S \vdash u : T$.
By IH, we have $t'\{\vec s/\vec x\} \in \Red S$ and then $u\{t'/x\}\{\vec s/\vec x\}  = u\{\vec s/\vec x, t'\{\vec s/\vec x\}/x\} \in \Red T$ and from there, Lemma~\ref{lem:symbol-red}.
\qedhere
\end{itemize}
\end{proof}

\section{Revisiting the Examples}

So what does all this give us? Can we use this technique to show termination for some interesting calculi?

\paragraph*{Global state}

Recall the rewrite rules of global state given earlier:
\begin{align*}
  assign_i(get(t_1,\, \dots,\, t_n)) &\rw assign_i(t_i)\\
  assign_i(assign_j(s)) &\rw assign_j(s) \\
  get(t_1,\, \dots,\, t_i,\, \dots,\, t_n) &\rw get(t_1,\, \dots,\, s_i,\, \dots,\, t_n)\\
&\text{
\hspace{2em}
where $t_i = get(s_1,\, \dots,\, s_n)$}
\end{align*}
These are easily shown to be normalizing: each only needs the RPO(3) case once or twice.

\paragraph*{Nondeterminism}
\[or(or(s, t), u)) \rw or(s, or(t, u))
\]
With one effect symbol, the $>_\Sigma$ relation is empty. But we check the RPO conditions on the solitary rule. Because the function symbols match, we use RPO(1). We have to check the lexicographical ordering of the arguments: $or(s,\, t) \succ s$ (by RPO(3)) so we don't need to check the second argument. Then each argument on the RHS must be less than the whole term on the left, which is easily done (recursively applying RPO(3) or (1)). Note that the lexicographical ordering was crucial to ensuring this rule makes progress toward termination.

\paragraph*{Parallelism}
For parallelism, set $par >_\Sigma e$ for each other effect symbol $e$. Recursive comparisons can then be carried out. Note for example that $par(e(s_1,\dots,s_n),\, t) \succ par(s_i,\, t)$ under RPO(1), the arguments are lexicographically decreasing, and $s_i$ and $t$ can each be found as subterms.

\paragraph*{Request-retry}

Restating the rewrite rules:
\begin{eqnarray*}
retry(zero(), request(t, s_1,\, \dots,\, s_n)) &\rw& t \\
retry(succ(u), request(t, s_1,\, \dots,\, s_n)) &\rw& request(retry(u, t'), s_1,\, \dots,\, s_n)\\
&&&where $t' = request(t, s_1,\, \dots,\, s_n)$
\end{eqnarray*}
Let $\id{retry} >_\Sigma \id{request}.$
The first rule has $t$ as a subterm of the left-hand side, so it is in RPO.
The second rule has $\id{retry} >_\Sigma \id{request}$ and then we need to show
\[retry(succ(u), request(t, s_1,\, \dots,\, s_n)) \succ retry(u, t')\]
and each $s_i$ has
\[retry(succ(u), request(t, s_1,\, \dots,\, s_n)) \succ s_i.\]
The latter is easy, via the subterm rule.
For the former, the head symbol matches, and then the immediate subterms $u$ and $t'$ can both be found as subterms of the left-hand side.

\section{Related Work}

Johann, et al.~\cite{johann2010generic} give ``a generic operational metatheory for algebraic effects''. The authors work with computation trees, or traces, like those which are the normal forms of algebraic-effect systems in the absence of equations. An equivalence (in fact, a preorder), between computation trees is given for each kind of effect system, but it is given through a separate definition which simulates the operation of each effect on a separate state-representation (a kind of abstract machine). By contrast, we have explored what happens when the effects can be defined by rewrites on the effect symbols themselves.
Gavazzo and Faggian~\cite{gavazzo2021relational} explain monadic effects in rewrite systems. This work interprets the rewrite relation \emph{itself} as monadic/effectful, so for example the rewrite relation can have a probabilistic distribution on its possible right-hand terms.

We have set out to show that a set of rewrite rules can be applied in any order and still normalize. But one may instead choose a \emph{particular} rewriting or normalization strategy. Normalization By Evaluation is one such approach, and Ahman and Staton~\cite{ahman2013normalization} have shown how to perform NBE on a calculus with algebraic effects and a sequencing form (like our $\kw{let}$).

We build on the long history of rewrite-rule orderings to prove termination. Dershowitz~\cite{dershowitz1982orderings} gives the original RPO ordering and termination proof. That proof is entirely different from the reducibility method, which is necessitated by the difficulties of the let-assoc rule. 
Okada~\cite{okada1989strong} is the hero of our present work, as it is the first paper to show a general proof for strong normalization of any SN rewrite system crossed with the syntax of simply-typed lambda-calculus. Showing such orthogonality between rewrites and other syntax features is the spirit of the present work.

\section{Future Work}
So far, we have only shown a modest improvement on existing strong-normalization systems. The symbols in the system have their own rewrite rules, but are allowed to interact with $\kw{let}$ in just one way, commuting out of the subject position. We hope to give similar strong-normalization proofs for systems in which the rewrite system can specify further interactions with $\kw{let}$ (although some restrictions may remain). Our grand ``test cases'' for the technique are the systems in Cooper~\cite{cooper2009script} and Ricciotti and Cheney~\cite{ricciotti2020snhorq}: when we can prove these strongly-normalizing with only a symbol-ordering, we will have succeeded.

\paragraph*{Thanks}
Thanks to Sam Lindley, Matija Pretnar, and Wilmer Ricciotti for helpful comments shaping this work.

\bibliography{horpo-algeff-rewr}

\end{document}